\newcommand{\argmin}{\mathop{\rm arg~min}\limits}
\newcommand{\indep}{\perp\!\!\!\perp}
\theoremstyle{plain}
\newtheorem{theorem}{Theorem}
\newtheorem*{theorem*}{Theorem}
\newtheorem{definition}{Definition}
\newtheorem*{definition*}{Definition}
\newtheorem{lemma}{Lemma}
\newtheorem*{lemma*}{Lemma}
\newtheorem*{corollary*}{Corollary}
\newtheorem{assumption}{Assumption}
\newtheorem*{assumption*}{Assumption}
\newtheorem{proposition}{Proposition}
\newtheorem*{proposition*}{Proposition}
\numberwithin{equation}{section}
\title{Identification and Estimation under Multiple Versions of Treatment: Mixture-of-Experts Approach}
\author{
  \href{https://orcid.org/0009-0001-6856-481X}
  {\includegraphics[scale=0.06]{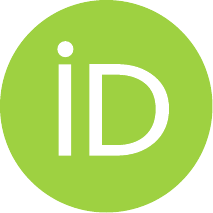}\hspace{1mm}Kohei Yoshikawa}
  \thanks{Graduate School of Mathematics, Kyushu University, 744, Motooka, Nishi-ku, Fukuoka, 819-0395 Japan. e-mail: (\texttt{yoshikawa.kohei.603 (at) s.kyushu-u.ac.jp})}
  \thanks{NTT DATA Mathematical Systems Inc., 1F Shinanomachi Rengakan, 35 Shinanomachi, Shinjuku-ku, Tokyo 160-0016, Japan. e-mail: (\texttt{yoshikawa (at) msi.co.jp})}
	\And
	\href{https://orcid.org/0000-0002-0804-0141}
  {\includegraphics[scale=0.06]{orcid.pdf}\hspace{1mm}Shuichi Kawano}
	\thanks{Faculty of Mathematics, Kyushu University, 744 Motooka, Nishi-ku, Fukuoka, 819-0395 Japan. e-mail: (\texttt{skawano (at) math.kyushu-u.ac.jp})}
}
\begin{document}
\maketitle
%
%

\begin{abstract}
  The Stable Unit Treatment Value Assumption (SUTVA) includes the condition that there are no multiple versions of treatment in causal inference.
  Though we could not control the implementation of treatment in observational studies, multiple versions may exist in the treatment.
  It has been pointed out that ignoring such multiple versions of treatment can lead to biased estimates of causal effects,
  but a causal inference framework that explicitly deals with the unbiased identification and estimation of version-specific causal effects has not been fully developed yet.
  Thus, obtaining a deeper understanding for mechanisms of the complex treatments is difficult.
  In this paper, we introduce the Mixture-of-Experts framework into causal inference and develop a methodology for estimating the causal effects of latent versions.
  This approach enables explicit estimation of version-specific causal effects even if the versions are not observed.
  Numerical experiments demonstrate the effectiveness of the proposed method.
\end{abstract}

%
%
\keywords{causal inference \and multiple versions of treatment \and compound treatments \and mixture-of-experts \and EM algorithm}

%
%
\section{Introduction}
\label{Introduction}

In the theory of causal inference, a fundamental starting point is the potential outcomes framework since \citet{Rubin1980-hl}, whose core assumption is the Stable Unit Treatment Value Assumption (SUTVA).
SUTVA typically comprises two components: (i) no interference, meaning that one unit's treatment does not affect another unit's outcomes; and (ii) no multiple versions of treatment, under which each treatment is uniquely defined with no variation in its implementation. 
The latter assumption is often implicitly employed to simplify the theory, but it frequently fails in applied research.
In education and medical interventions, as well as in social policy, the same nominal treatment can be delivered through heterogeneous modes or patterns.
Ignoring such treatment versions when estimating causal effects may lead to ambiguous results and undermine both interpretability and external validity \citep{VanderWeele2013-lw}.

To address this issue, a framework that treats treatments with multiple versions as compound treatments has been proposed \citep{Hernan2011-cy}.
Under conventional approaches, conditioning on covariates associated with the treatment and its versions can identify a treatment-specific average causal effect.
In this case, the estimand carries a coherent interpretation as a version-mixed average causal effect and may be sufficient when researchers have no substantive interest in versions \citep{VanderWeele2013-lw}.
However, when versions substantively characterize the content or the mode of implementation of the treatment, a simple mixture average can mask heterogeneity across versions, imposing serious limits on interpretability and policy relevance.
Policymakers and practitioners seek not only the overall treatment-specific causal effect but also answers to questions such as which implementation is particularly effective and whether some versions may be harmful.
Although the perspective of compound treatments is important for well-defined average causal effects, statistical methodology for identifying and estimating version-specific causal effects remains underdeveloped.

From the perspective of statistical modeling, the problem of multiple versions of treatment is essentially one of estimating an unobserved mixture structure.
This observation motivates leveraging classical results on the identifiability and consistency of finite mixture models \citep{Teicher1963-ya,Yakowitz1968-zm,Yakowitz1969-cz,Jiang1999-vw}. 
These theoretical developments provide a principled foundation for modeling and estimating causal effects for the versions of treatment.

Recently, several methods including latent variables have been introduced into causal inference.
For example, \citet{Kang2020-xx} worked on the developments to identify latent classes in complex survey data, representing aspects of the confounding structure as latent classes to improve identification.
\citet{Fong2023-qs} constructed latent treatments for high-dimensional and unstructured inputs such as text and estimated their causal effects.
\citet{Goplerud2025-vf} addressed high-dimensional treatments in conjoint analysis through a mixture modeling to recover heterogeneous causal effects. 
While these contributions are important directions by importing latent-variable and mixture techniques into causal inference, they are not designed to address the SUTVA violation arising from multiple versions of treatment.

In this study, we construct the theoretical framework for the mixture modeling with multiple versions of treatment problem in causal inference.
Specifically, we model the multiple versions that may exist within each treatment as latent variables and propose a framework that employs mixture-of-experts (MoE) \citep{Jacobs1991-rx} for identification and estimation.
We refer to these version variables as latent versions.
Our approach offers two primary contributions.
First, the framework supports interpretations that go beyond conventional average treatment effects by enabling the direct identification of version-specific causal effects.
Second, we develop a maximum-likelihood estimation procedure via the expectation-maximization (EM) algorithm and asymptotic theory of the estimator for version-specific causal effects under standard regularity and identifiability conditions.
In conclusion, the proposal aims to achieve both accurate effect estimation and interpretability in observational data where the existence of multiple versions of treatment is practically unavoidable.

The remainder of the paper is organized as follows.
Section 2 introduces the model setup and identification conditions. Section 3 derives the estimation procedure based on the EM algorithm and establishes theoretical properties, including identifiability and consistency.
Section 4 presents simulation studies, while Section 5 presents an empirical study. 
Section 6 concludes with a discussion and directions for future research.

Technical proofs are provided in the appendices.
Our implementations and supplementary materials are available at \url{https://github.com/yoshikawa-kohei/latent-versions}.

%
%
\section{Preliminaries}

Suppose we observe $n$ independent and identically distributed units.
For each unit $i$, we consider a treatment $T_i$ taking values in the finite set $\mathcal{T}=\{0,1,\dots,J-1\}$, where $J \ge 2$ denotes the number of treatments.
Each treatment $t \in \mathcal{T}$ has multiple versions. 
The version received by unit $i$ under the condition $T_i=t$ is denoted by $V_i^{(t)}\in\mathcal{V}^{(t)}=\{0,1,\ldots,J^{(t)}-1\}$, where $J^{(t)} \ge 2$ denotes the number of versions associated with treatment $t$.
Here, we define the treatment indicator function $D_i^{(t)}$ as:
\begin{align}
    D_{i}^{(t)} = \begin{cases}
        1, & \text{if}\ T_i = t, \\
        0, & \text{otherwise}.
    \end{cases}
\end{align}
Similarly, we define the version indicator function $Z_i^{(t,v)}$ as:

\begin{align}
    Z_{i}^{(t, v)} = \begin{cases}
        1, & \text{if}\ T_i=t \text{ and } V_i^{(t)}=v,\\
        0, & \text{otherwise}.
    \end{cases}
\end{align}

These indicator functions represent the exclusivity of the treatment.
Specifically, each unit is assigned exactly one treatment-version pair.

By using the potential outcome framework \citep{Rubin1974-zj}, we define a set of potential outcomes as $\{Y_i^{(t, v)} \mid t \in \mathcal{T}, v \in \mathcal{V}^{(t)}\}$, where $Y_i^{(t,v)}$ is a random variable that maps each treatment $t \in \mathcal{T}$ and version $v \in \mathcal{V}^{(t)}$ to a potential outcome for each unit $i$.
The observed outcome associated with the only potential outcome is denoted by $Y_i \equiv Y_i^{(t, v)}$.
This notation implicitly assumes the SUTVA, described by \citet{Rubin1978-ht}.
The SUTVA comprises no interference across units and well-defined treatments and versions.
The assumptions ensure the corresponding potential outcome $Y_i^{(t,v)}$ is uniquely determined when $T_i=t$ and $V_i^{(t)}=v$.
We further introduce covariate vectors $\bm{X}_i \in \mathbb{R}^p$ for each unit $i$.
The covariates $\bm{X}_i$ are common confounders that may affect $T_i$, $V_i^{(t)}$, and $Y_i^{(t,v)}$ for all $t \in \mathcal{T}, v \in \mathcal{V}^{(t)}$.

Under the following standard assumptions, causal effects for the potential outcomes $Y_i^{(t,v)}$ can be identified.
Each assumption represents a generalization of widely used concepts in observational studies as settings involving multiple versions of treatment \citep{Hernan2011-cy, VanderWeele2013-lw}.

\begin{assumption}[Consistency]
  For all units $i$,
  $$
  Y_i = Y_i^{(t,v)} \quad \text{ when } T_i=t \text{ and } V_i^{(t)}=v.
  $$
\end{assumption}

\begin{assumption}[Weak conditional exchangeability / no unmeasured confounding]
  For each $t \in \mathcal{T}$ and $v \in \mathcal{V}^{(t)}$,
  $$
  Y_i^{(t,v)} \indep \qty{D_i^{(t)}, Z_i^{(t,v)}} \mid \bm{X}_i.
  $$
\end{assumption}

Assumption 1 means that versions of treatment are well-defined and that the potential outcome $Y_i^{(t,v)}$ corresponding to the realized pair $(t, v)$ equals the observed outcome.
Assumption 2 imposes unconfoundedness separately for each treatment-version pair $(t,v)$, extending the standard conditional exchangeability assumption to the joint assignment of treatments and versions.
Under these conditions, the target population-level estimands are identifiable \citep{Imbens2000-tm}.
Specifically, conditional on the covariates $\bm{X}_i$, the potential outcome $Y_i^{(t,v)}$ is assumed to be independent of both the treatment indicator $D_i^{(t)}$ and the version indicator $Z_i^{(t,v)}$.

In practical applications, we are often interested in estimating the expectation of potential outcomes.
By introducing these assumptions, the conditional expectation of the potential outcome $Y_i^{(t, v)}$ is represented by observed variables as follows:
\begin{align}
  \mathbb{E}\qty[Y_i^{(t, v)} \mid \bm{X}_i] = \mathbb{E}\qty[Y_i^{(t, v)} \mid  D_i^{(t)}=1, Z_i^{(t,v)} = 1, \bm{X}_i] = \mathbb{E}\qty[Y_i \mid T_i=t, V_i^{(t)} = v, \bm{X}_i].
\end{align}
Then, marginalizing over the distribution of $\bm{X}_i$ obtains the expectation of the potential outcome $\psi_{t,v} = \mathbb{E}\qty[Y^{(t, v)}]$.
However, estimating the expectation by direct stratification over the covariates is infeasible due to the curse of dimensionality.
Propensity score weighting has become a standard tool in causal inference for addressing this difficulty.
Then, we introduce the treatment and version assignment probabilities as the generalized propensity score \citep{Imbens2000-tm, Imai2004-pf}: 
\begin{definition}[Generalized Propensity Score]
    For each treatment $t \in \mathcal T$, the treatment assignment probability given covariates $\bm X_i$ is defined as
    \begin{align}
        e_t(\bm{X}_i) &:= \mathrm{Pr} \qty(T_i = t \mid \bm{X}_i,) = \mathbb{E}\qty[D_i^{(t)} \mid \bm{X}_i], \ \forall t \in \mathcal{T}.
    \end{align}
    When the treatment $T_i$ equals to $t$, the version assignment probability for each $v \in \mathcal V^{(t)}$ is defined as
    \begin{align}
        \pi_{t, v}\qty(\bm{X}_i) &:= \mathrm{Pr}\qty(V_i^{(t)} = v \mid T_i = t, \bm{X}_i,) = \mathbb{E} \qty[Z_i^{(t, v)} \mid T_i = t, \bm{X}_i], \ \forall v \in \mathcal{V}^{(t)}.
    \end{align}
    Then, the joint assignment probability for the treatment-version pair $(t,v)$ is given by
    \begin{align}
        p_{t,v}(\bm X_i) := e_t(\bm{X}_i) \pi_{t, v}\qty(\bm{X}_i).
    \end{align}
    We refer to $p_{t,v}(\bm X_i)$ as a generalized propensity score for the treatment-version pairs $(t, v)$.
\end{definition}
If weak unconfoundedness holds when conditioning on the covariates, then the same property also holds when conditioning on the generalized propensity score.
\begin{proposition}
  \label{prop:propensity-score-theorem}
  Let $W_i^{(t,v)} = D_i^{(t)} Z_i^{(t,v)}$ and $p_{t,v} (\bm{X}_i) = e_t(\bm{X}_i) \pi_{t,v}(\bm{X}_i)$, then under Assumption 2, for each $(t,v)$
    \begin{align}
          Y_i^{(t, v)} \indep W_i^{(t,v)} \mid p_{t, v}\qty(\bm{X}_i).
  \end{align}
\end{proposition}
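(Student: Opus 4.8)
The plan is to follow the classical balancing-score argument of \citet{Rubin1974-zj}-style propensity-score theory, adapted to the treatment--version indicator $W_i^{(t,v)}$. The starting observation is that Assumption 2 gives $Y_i^{(t,v)} \indep \qty{D_i^{(t)}, Z_i^{(t,v)}} \mid \bm{X}_i$, and since $W_i^{(t,v)} = D_i^{(t)} Z_i^{(t,v)}$ is a measurable function of that pair, we also have $Y_i^{(t,v)} \indep W_i^{(t,v)} \mid \bm{X}_i$. Moreover $W_i^{(t,v)}$ is Bernoulli, so the target conditional independence $Y_i^{(t,v)} \indep W_i^{(t,v)} \mid p_{t,v}(\bm{X}_i)$ is equivalent to the mean-matching identity
$$
\mathrm{Pr}\qty(W_i^{(t,v)} = 1 \mid Y_i^{(t,v)}, p_{t,v}(\bm{X}_i)) = \mathrm{Pr}\qty(W_i^{(t,v)} = 1 \mid p_{t,v}(\bm{X}_i)),
$$
so it suffices to show that both sides equal $p_{t,v}(\bm{X}_i)$.

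First I would record the key identity that the generalized propensity score is exactly the conditional joint assignment probability given the covariates: because $Z_i^{(t,v)} = 1$ forces $D_i^{(t)} = 1$, we have $W_i^{(t,v)} = Z_i^{(t,v)}$, and the factorization in the Definition of the generalized propensity score yields
$$
\mathbb{E}\qty[W_i^{(t,v)} \mid \bm{X}_i] = \mathrm{Pr}\qty(T_i = t, V_i^{(t)} = v \mid \bm{X}_i) = e_t(\bm{X}_i)\,\pi_{t,v}(\bm{X}_i) = p_{t,v}(\bm{X}_i).
$$
For the left-hand side I would then apply the law of iterated expectations along the nested conditioning $\sigma$-algebras $\sigma\qty(p_{t,v}(\bm{X}_i), Y_i^{(t,v)}) \subseteq \sigma\qty(\bm{X}_i, Y_i^{(t,v)})$, which is valid since $p_{t,v}(\bm{X}_i)$ is a measurable function of $\bm{X}_i$, to write
$$
\mathrm{Pr}\qty(W_i^{(t,v)} = 1 \mid Y_i^{(t,v)}, p_{t,v}(\bm{X}_i)) = \mathbb{E}\qty[\mathrm{Pr}\qty(W_i^{(t,v)} = 1 \mid Y_i^{(t,v)}, \bm{X}_i) \,\Big|\, Y_i^{(t,v)}, p_{t,v}(\bm{X}_i)].
$$
Assumption 2 removes $Y_i^{(t,v)}$ from the inner conditional, and the identity above gives that the inner probability equals $p_{t,v}(\bm{X}_i)$; being already $\sigma(p_{t,v}(\bm{X}_i))$-measurable, it passes unchanged through the outer expectation, so the left-hand side equals $p_{t,v}(\bm{X}_i)$. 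The same computation with the $Y$-conditioning dropped (equivalently, integrating the previous display over $Y_i^{(t,v)}$) shows the right-hand side also equals $p_{t,v}(\bm{X}_i)$, which completes the argument.

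The computations are routine; the only points requiring care are the measure-theoretic bookkeeping in the tower-property step (checking that the conditioning $\sigma$-algebras are genuinely nested) and the Bernoulli reduction of conditional independence to mean-matching. I do not anticipate any substantive obstacle beyond these bookkeeping items.
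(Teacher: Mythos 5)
Your proposal is correct and follows essentially the same route as the paper's proof: the tower property over the nested conditioning $\sigma(p_{t,v}(\bm{X}_i), Y_i^{(t,v)}) \subseteq \sigma(\bm{X}_i, Y_i^{(t,v)})$, Assumption 2 to drop $Y_i^{(t,v)}$ from the inner conditional, and the identity $\mathbb{E}[W_i^{(t,v)} \mid \bm{X}_i] = p_{t,v}(\bm{X}_i)$. Your extra remarks on the Bernoulli mean-matching reduction and the measurability bookkeeping are just a more explicit version of steps the paper leaves implicit.
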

The proof is given in Appendix \ref{sec:Proof_of_Proposition1}. 

We next impose a standard overlap condition to ensure that all treatment-version pairs are probabilistically observable on the covariate space.
\begin{assumption}[Positivity]
    \label{assumption-Positivity}
    The treatment and version assignment probabilities are positive for all covariate values and treatment-version pairs: for any $(t,v)$, we assume almost surely
      \begin{align}
      0 < e_t \qty(\bm{X}_i) < 1,
      \quad
      0< \pi_{t,v} \qty(\bm{X}_i) < 1.
      \end{align}
\end{assumption}
This assumption ensures that every unit has a positive probability of receiving each treatment-version pair within the covariate space characterized by $\bm{X}_i$.
Under Assumptions 1--3, the expectation of potential outcome $\psi_{t,v}$ can be represented by observed data, which is familiar to inverse probability weighting (IPW) representation.
\begin{proposition}[Identification of the version-specific expectation of potential outcome]
  \label{prop:identification}
Under Assumptions 1--3, the expectation of potential outcome $\psi_{t,v}$ is identified as
\begin{align}
 \psi_{t,v} = \mathbb{E} \qty[Y_i^{(t,v)}]
  = 
  \mathbb{E} \qty[
    \frac{D_i^{(t)} Z_i^{(t,v)}}
    {e_t(\bm{X}_i) \pi_{t,v}(\bm{X}_i)} Y_i
    ].
\end{align}
\end{proposition}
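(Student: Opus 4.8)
The plan is to start from the right-hand side of the claimed identity and collapse it to $\psi_{t,v}$ through a standard inverse-probability-weighting argument, invoking consistency, weak conditional exchangeability, and positivity in turn. Throughout, write $W_i^{(t,v)} = D_i^{(t)} Z_i^{(t,v)}$ and $p_{t,v}(\bm{X}_i) = e_t(\bm{X}_i)\pi_{t,v}(\bm{X}_i)$ as in Proposition~\ref{prop:propensity-score-theorem}. Positivity (Assumption~\ref{assumption-Positivity}) guarantees $p_{t,v}(\bm{X}_i) > 0$ almost surely, so $W_i^{(t,v)}/p_{t,v}(\bm{X}_i)$ is well-defined; together with an (implicit) integrability assumption $\mathbb{E}|Y_i^{(t,v)}| < \infty$, every expectation below is finite.

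First I would use consistency to swap the observed outcome for the potential outcome inside the weighted expectation. On the event $\{W_i^{(t,v)} = 0\}$ both $W_i^{(t,v)} Y_i$ and $W_i^{(t,v)} Y_i^{(t,v)}$ vanish, while on $\{W_i^{(t,v)} = 1\}$ one has $T_i = t$ and $V_i^{(t)} = v$, so Assumption~1 yields $Y_i = Y_i^{(t,v)}$; hence $W_i^{(t,v)} Y_i = W_i^{(t,v)} Y_i^{(t,v)}$ pointwise, and the right-hand side equals $\mathbb{E}[\,W_i^{(t,v)} Y_i^{(t,v)} / p_{t,v}(\bm{X}_i)\,]$. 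Next I would condition on $\bm{X}_i$ and use the tower property: since $p_{t,v}(\bm{X}_i)$ is $\sigma(\bm{X}_i)$-measurable it pulls out of the inner conditional expectation, leaving $\mathbb{E}[\,p_{t,v}(\bm{X}_i)^{-1}\,\mathbb{E}[W_i^{(t,v)} Y_i^{(t,v)} \mid \bm{X}_i]\,]$.

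Because $W_i^{(t,v)}$ is a deterministic function of $(D_i^{(t)}, Z_i^{(t,v)})$, Assumption~2 implies $Y_i^{(t,v)} \indep W_i^{(t,v)} \mid \bm{X}_i$, so the inner expectation factorizes into $\mathbb{E}[W_i^{(t,v)} \mid \bm{X}_i]\,\mathbb{E}[Y_i^{(t,v)} \mid \bm{X}_i]$. It then remains to identify $\mathbb{E}[W_i^{(t,v)} \mid \bm{X}_i]$: since $Z_i^{(t,v)} = 1$ already forces $D_i^{(t)} = 1$, we have $W_i^{(t,v)} = Z_i^{(t,v)}$, and the definition of the generalized propensity score gives $\mathbb{E}[W_i^{(t,v)} \mid \bm{X}_i] = \mathrm{Pr}(V_i^{(t)} = v \mid T_i = t, \bm{X}_i)\,\mathrm{Pr}(T_i = t \mid \bm{X}_i) = \pi_{t,v}(\bm{X}_i)\, e_t(\bm{X}_i) = p_{t,v}(\bm{X}_i)$. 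Substituting this back cancels the weight exactly, so the expression collapses to $\mathbb{E}[\,\mathbb{E}[Y_i^{(t,v)} \mid \bm{X}_i]\,] = \mathbb{E}[Y_i^{(t,v)}] = \psi_{t,v}$, which is the claim.

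The two steps needing the most care are the pointwise consistency substitution on the complementary events $\{W_i^{(t,v)} = 0\}$ and $\{W_i^{(t,v)} = 1\}$, and the observation that the conditional independence of Assumption~2 transfers from $(D_i^{(t)}, Z_i^{(t,v)})$ to the function $W_i^{(t,v)}$; the rest is routine manipulation of iterated expectations. One could instead condition on $p_{t,v}(\bm{X}_i)$ and invoke Proposition~\ref{prop:propensity-score-theorem} directly, but conditioning on $\bm{X}_i$ keeps the computation most transparent. I do not anticipate a genuine obstacle here---this is the multiple-versions analogue of the classical Horvitz--Thompson identity---so the main point is simply to record the integrability condition on $Y_i^{(t,v)}$ that legitimizes the expectations.
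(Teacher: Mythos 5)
Your proof is correct and is essentially the paper's argument run in reverse: the paper starts from $\psi_{t,v}$, inserts the factor $\mathbb{E}[D_i^{(t)}Z_i^{(t,v)}/(e_t(\bm{X}_i)\pi_{t,v}(\bm{X}_i))\mid\bm{X}_i]=1$, merges the product using Assumption~2, and applies consistency, whereas you start from the IPW expression and unwind the same three identities in the opposite order. Your extra care with the pointwise consistency substitution on $\{W_i^{(t,v)}=0\}$ versus $\{W_i^{(t,v)}=1\}$ and the explicit integrability remark are welcome refinements but do not change the route.
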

The proof is given in Appendix \ref{sec:Proof_of_Proposition2}. 

To formalize the causal quantity of interest, we define the average causal effect as the contrast between treatment-version pairs.
For any two pairs $(t, v)$ and $(t', v')$, we define as
\begin{align}
  \Delta_{(t,v), (t',v')} &= \psi_{t',v'} - \psi_{t,v}.
  \label{eq:def-average-causal-effect}
\end{align}
This estimand represents the population-level mean difference in potential outcomes between assigning all units to version $v$ of treatment $t$ and assigning them to version $v'$ of treatment $t'$.
These causal estimands are the primary targets of inference in the remainder of the paper.

\section{Proposed Method}
In the previous section, we demonstrated that causal effects for each version are identifiable when the corresponding treatment versions are observable under standard assumptions.
In practical applications, however, even when multiple versions exist, the available data often do not contain enough information to identify the underlying versions of the treatment.
Ignoring such treatment heterogeneity induced by multiple versions may obscure the interpretation of the estimated causal effect and lead to ambiguous conclusions \citep{VanderWeele2013-lw}.

To overcome the problem, we model each version of treatment as a latent variable.
This approach enables us to recover an unobserved version structure from data and enhance the interpretability of causal effects.
Our central idea is to represent the observed outcome distribution for each treatment as an MoE \citep{Jacobs1991-rx} model.
This MoE formulation provides the simultaneous identification and the flexible and interpretable estimation of causal effects while statistically recovering the unobserved version structure.

The following subsections first present the formal model specification using latent versions and then describe the estimation procedure based on maximum likelihood and the EM algorithm.

\subsection{Modeling Latent Versions via MoE}

We suppose the conditional distribution of the outcome given treatment $t \in \mathcal{T}$ as follows:
\begin{align}
  \Pr \qty(Y_i \mid T_i = t, \bm{X}_i) = \sum_{v \in \mathcal{V}^{(t)}} \pi_{t,v}(\bm{X}_i) f_{t,v} (Y_i \mid \bm{X}_i),
\end{align}
where $\pi_{t,v}$ and $f_{t,v}$ denote, respectively, the version assignment model and the outcome model corresponding to treatment $t$ and version $v$.  
In the context of an MoE framework, they correspond to the gating function and the expert function, respectively.  
Here, we assume that each function represents a parametric model.
Especially, the gating function for each treatment $t \in \mathcal{T}$ is modeled by a multinomial logistic regression:
\begin{align}
\pi_{t,v}(\bm{X}_i ; \bm{\eta}_{t,v})
= \frac{
  \exp \qty(\bm{\eta}_{t,v}^\top \tilde{\bm{X}}_i)
}{
  \sum_{u \in \mathcal{V}^{(t)}} \exp\qty( \bm{\eta}_{t, u}^\top \tilde{\bm{X}}_i  )
}, \quad \text{ for all } v \in \mathcal{V}^{(t)},
\label{eq:version_assignment_model}
\end{align}
where $\tilde{\bm{X}}_i = (1, \bm{X}_i^\top)^\top$, and for identification, we set $\bm{\eta}_{t,0} = \bm{0}$.
Next, we assume normality for the expert function and specify the model as
\begin{align}
  Y_i \mid T_i = t, V_i^{(t)} = v , \bm{X}_i \sim \mathcal{N}(\mu_{t,v}(\bm{X}_i), \sigma_{t,v}^2),
  \label{eq:outcome_model}
\end{align}
where $\mu (\bm{X}_i) = \bm{\beta}_{t,v}^\top \tilde{\bm{X}}_i$, and we denote the probability density function $f_{t,v}(Y_i \mid \bm{X}_i; \bm{\gamma}_{t,v})$, which $\bm{\gamma}_{t,v} = (\bm{\beta}_{t,v}, \sigma_{t,v})$.
Finally, the treatment assignment model is also modeled by a multinomial logistic regression:
\begin{align}
e_{t}(\bm{X}_i ; \bm{\zeta}_{t})
= \frac{
  \exp \qty( \bm{\zeta}_{t}^\top \tilde{\bm{X}}_i^\top)
}{
  \sum_{u \in \mathcal{T}} \exp\qty( \bm{\zeta}_{u}^\top \tilde{\bm{X}}_i )
}, \quad \text{ for all } t \in \mathcal{T},
\label{eq:treatment_assignment_model}
\end{align}
in which $\bm{\zeta}_{0}$ is fixed at $\bm{0}$ for identification.

\subsection{Estimation via EM algorithm}

Next, we formulate the incomplete-data likelihood for the implementation of the EM algorithm \citep{Dempster1977-yd}.
Because the version indicators are unobserved, the likelihood of observed dataset is obtained by marginalizing over the latent versions.
Specifically, the incomplete-data likelihood is given by
\begin{align}
  L(\bm{\theta})
  &= \prod_{i=1}^{n}
  \prod_{t \in \mathcal{T}}
  \qty{
    e_t(\bm{X}_i; \bm{\zeta}_t)
    \sum_{v \in \mathcal{V}^{(t)}}
    \pi_{t,v}(\bm{X}_i; \bm{\eta}_{t,v})
    f_{t,v}(Y_i \mid \bm{X}_i; \bm{\gamma}_{t,v})
  }^{D_i^{(t)}},
  \label{eq:incomplete_likelihood}
\end{align}
where $\bm{\theta} = \{\bm{\zeta}_t, \bm{\eta}_{t,v}, \bm{\gamma}_{t,v} : t \in \mathcal{T}, v \in \mathcal{V}^{(t)}\}$ denotes the full parameter set.
Taking the logarithm of \eqref{eq:incomplete_likelihood}, the incomplete-data log-likelihood is expressed as
\begin{align}
  \ell(\bm{\theta})
  &= \sum_{i=1}^{n}
  \sum_{t \in \mathcal{T}}
  D_i^{(t)}
  \qty[
    \log e_t(\bm{X}_i; \bm{\zeta}_t)
    + \log  \sum_{v \in \mathcal{V}^{(t)}}\qty{
    \pi_{t,v}\qty(\bm{X}_i; \bm{\eta}_{
    t,v}) 
    f_{t,v}\qty(Y_i \mid \bm{X}_i; \bm
    {\gamma}_{t,v})}
  ].
  \label{eq:incomplete_loglikelihood}
\end{align}
To estimate the parameters $\bm{\theta}$, we maximize the following optimization problems:
\begin{align}
  \bm{\zeta}_t^{\text{new}}
  &=
  \arg\max_{\bm{\zeta}_t}
  \sum_{i=1}^n \sum_{t \in \mathcal{T}} D_i^{(t)} \log e_t(\bm{X}_i;\bm{\zeta}_{t}) \quad \text{ for all } t \in \mathcal{T},
\end{align}
and we maximize the optimization problem for each treatment $t$:
\begin{align}
  \qty(\bm{\eta}_{t,v}^{\text{new}}, \bm{\gamma}_{t,v}^{\text{new}})
  &=
  \arg\max_{\qty(\bm{\eta}_{t,v}, \bm{\gamma}_{t,v})}
  \sum_{i=1; T_i = t}^n
  \log \sum_{v \in \mathcal{V}^{(t)}}
  \qty{
    \pi_{t,v}(\bm{X}_i;\bm{\eta}_{t,v})
    f_{t,v}(Y_i \mid \bm{X}_i; \bm{\gamma}_{t,v})
  } \quad \text{ for all } v \in \mathcal{V}^{(t)}.\label{eq:incomplete_maximization}
\end{align}
We could not maximize the incomplete-data log-likelihood \eqref{eq:incomplete_maximization} directly because of the summation inside the logarithm.
Therefore, we employ the EM algorithm to replace these latent indicators with their posterior expectations (E-step) and update the parameters $\bm{\theta}$ to maximize the expected log-likelihood (M-step).  

Here, we take the conditional expectation of the complete-data log-likelihood with respect to the posterior distribution of the latent version indicator $Z_i^{(t,v)}$ under the current parameter estimates $\bm{\theta}_t = \qty{ \bm{\eta}_{t,v}, \bm{\gamma}_{t,v}: v \in \mathcal{V}^{(t)} }$.
The complete-data log-likelihood for treatment $t$ is given by
\begin{align}
  \ell_c(\bm{\theta}_{t})
  &= \sum_{i=1: T_i = t}^{n}
  \sum_{v \in \mathcal{V}^{(t)}}
  Z_i^{(t,v)}
  \qty{
    \log \pi_{t,v}(\bm{X}_i; \bm{\eta}_{t,v})
    + \log f_{t,v}(Y_i \mid \bm{X}_i; \bm{\gamma}_{t,v})
  }.
  \label{eq:complete_log_likelihood_for_each_treatment}
\end{align}
Then, the expected complete-data log-likelihood for treatment $t$ is computed as
\begin{align}
  Q(\bm{\theta}_{t}^{\text{new}} \mid \bm{\theta}_{t})
  &= \mathbb{E}_{Z_i^{(t,v)} \mid Y_i, D_i^{(t)}, \bm{X}_i}\qty[ \ell_c(\bm{\theta}_{t}) ] \\
  &= \sum_{i=1: T_i=t}^{n}
  \sum_{v \in \mathcal{V}^{(t)}}
  \mathbb{E}_{Z_i^{(t,v)} \mid Y_i, D_i^{(t)}, \bm{X}_i}
  \qty[
    Z_i^{(t,v)}
  ]
  \qty{
    \log \pi_{t,v}(\bm{X}_i; \bm{\eta}_{t,v})
    + \log f_{t,v}(Y_i \mid \bm{X}_i; \bm{\gamma}_{t,v})
  }\\
  &= \sum_{i=1: T_i=t}^{n}
  \sum_{v \in \mathcal{V}^{(t)}}
  r_{t,v}(Y_i, \bm{X}_i)
  \qty{
    \log \pi_{t,v}(\bm{X}_i; \bm{\eta}_{t,v})
    + \log f_{t,v}(Y_i \mid \bm{X}_i; \bm{\gamma}_{t,v})
  }.
\end{align}
Here $r_{t,v}(Y_i, \bm{X}_i)$ denotes the posterior responsibility, which means the observation $i$ belongs to version $v$ given treatment $t$.
Using Bayes' theorem, it can be computed as
\begin{align}
  r_{t, v}(Y_i, \bm{X}_i)
&:=
\Pr\qty(V_i^{(t)}=v \mid T_i=t,Y_i,\bm{X}_i)\\
&= \frac{
\Pr\qty(V_i^{(t)}=v \mid T_i=t,\bm{X}_i) \Pr(Y_i \mid T_i=t, V_i^{(t)}=v, \bm{X}_i)
}{
\Pr\qty(Y_i \mid T_i = t, \bm{X}_i)
}\\
&= \frac{
\Pr\qty(V_i^{(t)}=v \mid T_i=t,\bm{X}_i) \Pr(Y_i \mid T_i=t, V_i^{(t)}=v, \bm{X}_i)
}{
\sum_{u \in \mathcal{V}^{(t)}} \Pr\qty(V_i^{(t)}=u \mid T_i=t,\bm{X}_i) \Pr\qty(Y_i \mid T_i = t, V_i^{(t)}=u, \bm{X}_i) 
}\\
&=\frac{\pi_{t,v}(\bm{X}_i;\bm{\eta}_t) f_{t,v}(Y_i\mid \bm{X}_i;\bm{\gamma}_{t,v})}
{\sum_{u\in\mathcal{V}^{(t)}} \pi_{t,u}(\bm{X}_i;\bm{\eta}_t) f_{t,u}(Y_i\mid \bm{X}_i;\bm{\gamma}_{t,u})}.
\label{eq:resp}
\end{align}
In the E-step, $r_{t,v}(Y_i, \bm{X}_i; \bm{\theta}_t)$ is computed at the current parameter estimates.
In the M-step, the parameters are updated by maximizing the expected complete-data log-likelihood for the treatment $t \in \mathcal{T}$:
\begin{align}
\bm{\eta}_{t,v}^{\text{new}}
&=
\arg\max_{\bm{\eta}_{t,v}}
\sum_{i=1: T_i = t}^n \sum_{v\in\mathcal{V}^{(t)}} r_{t,v}(Y_i, \bm{X}_i; \bm{\theta}_t) \log \pi_{t,v}(\bm{X}_i;\bm{\eta}_{t,v}) \quad \text{ for all } v \in \mathcal{V}^{(t)}, \label{eq:Mstep_gating}\\
\bm{\gamma}_{t,v}^{\text{new}}
&=
\arg\max_{\bm{\gamma}_{t,v}}
\sum_{i=1: T_i = t}^n \sum_{v \in \mathcal{V}^{(t)}} r_{t,v}(Y_i, \bm{X}_i; \bm{\theta}_t)  \log f_{t,v}(Y_i\mid \bm{X}_i;\bm{\gamma}_{t,v}) \quad \text{ for all } v \in \mathcal{V}^{(t)}. \label{eq:Mstep_expert}
\end{align}
The details of updating the parameters for the outcome model are provided in the Appendix \ref{sec:closed-form-solutions-for-gaussian-outcome-experts}.
The above estimation procedure can be summarized as Algorithm \ref{alg:estimation-procedure}.

\begin{algorithm}
\caption{Estimation procedure}
\label{alg:estimation-procedure}
\begin{algorithmic}[1]
\State \textbf{Inputs:} Data $\{(Y_i,\bm{X}_i,T_i)\}_{i=1}^n$, treatment set $\mathcal T$, version sets $\{\mathcal V^{(t)}\}_{t\in\mathcal T}$
\State \textbf{Initialization:} $\qty{\bm{\zeta}_t^{(0)}, \bm{\eta}_{t,v}^{(0)}, \bm{\gamma}_{t,v}^{(0)}: t \in \mathcal{T}, v \in \mathcal{V}^{(t)}}$ is initialized by random values. 
\State \textbf{Fixing parameters:} Fix the coefficient  $\bm{\zeta}_{0} = \bm{0}$ and $\bm{\eta}_{t, 0} = \bm{0}$ for all $t \in \mathcal{T}$.
\State \textbf{Fitting treatment assignment model:} 
  \begin{align*}
    \widehat{\bm{\zeta}}_t \gets
      \arg\max_{\bm{\zeta}_t}\ \sum_{i=1}^n \sum_{t\in\mathcal T} D_i^{(t)} \log e_t(\bm{X}_i;\bm{\zeta}_t) \quad (\text{for all } t\in\mathcal T).
  \end{align*}
\Repeat
  \State \textbf{E-step (posterior responsibilities within $T_i=t$):} For each $t\in\mathcal T$, $v\in\mathcal V^{(t)}$, and $i$ with $T_i=t$,
  \begin{align}
  r_{t,v}^{(s+1)}(Y_i,\bm{X}_i; \bm{\theta}_t^{(s)})
  \gets \frac{
      \pi_{t,v}(\bm{X}_i;\bm{\eta}_{t,v}^{(s)}) f_{t,v}(Y_i \mid \bm{X}_i; \bm{\gamma}_{t,v}^{(s)})
    }
    {
      \sum_{u\in\mathcal{V}^{(t)}} \pi_{t,u}(\bm{X}_i;\bm{\eta}_{t,u}^{(s)}) f_{t,u}(Y_i \mid \bm{X}_i;\bm{\gamma}_{t,u}^{(s)})}.
  \end{align}
  \State \textbf{M-step (gating and expert functions):} For each $t\in\mathcal T$,
  \begin{align*}
    \bm{\eta}_{t,v}^{(s+1)} &\gets
    \arg\max_{\bm{\eta}_{t,v}} \sum_{i=1: T_i=t}^n r_{t,v}^{(s+1)}(Y_i,\bm{X}_i; \bm{\theta}_t^{(s)}) \log \pi_{t,v}(\bm{X}_i;\bm{\eta}_{t,v})
    \quad (\text{for all } v\in\mathcal{V}^{(t)}),\\
    \bm{\gamma}_{t,v}^{(s+1)} &\gets
    \arg\max_{\bm{\gamma}_{t,v}} \sum_{i=1: T_i=t}^n r_{t,v}^{(s+1)}(Y_i,\bm{X}_i; \bm{\theta}_t^{(s)}) \log f_{t,v}(Y_i \mid \bm{X}_i;\bm{\gamma}_{t,v})
    \quad (\text{for all } v\in\mathcal{V}^{(t)}).
  \end{align*}
  \State $s\gets s+1$
  \Until{convergence by $|\ell_c(\bm{\theta}_t^{(s)})-\ell_c(\bm{\theta}_t^{(s-1)})|<\mathrm{tol}$}
  \State \textbf{EM estimates:} $\qty{\widehat{\bm{\zeta}}_t, \widehat{\bm{\eta}}_{t,v}, \widehat{\bm{\gamma}}_{t,v}: t \in \mathcal{T}, v \in \mathcal{V}^{(t)}}$ at the final iterate.
  \State \textbf{Ordering:} For each $t\in\mathcal T$, reorder $\{(\widehat{\bm{\eta}}_{t,v}, \widehat{\bm{\gamma}}_{t,v})\}_{v\in\mathcal V^{(t)}}$ by the lexicographic rule on $\bm{\gamma}_{t,v}$ to resolve permutational invariance.
  \State \textbf{re-fixing parameters:} Fix the coefficient $\widehat{\bm{\eta}}_{t, 0} = \bm{0}$ for all $t \in \mathcal{T}$,
  by shifting the other coefficients so that the coefficients of version $v=0$ are re-adjusted to $\bm{0}$.
\State \textbf{Plug-in HT estimator:} For each $(t,v)$,
\begin{align*}
  \widehat\psi_{t,v}^{\mathrm{HT}}
=\frac{1}{n}\sum_{i=1}^n
\frac{ D_i^{(t)} r_{t,v}(Y_i,\bm{X}_i;\widehat{\bm{\theta}}_t)}{ e_t(\bm{X}_i;\widehat{\bm{\zeta}}_t), \pi_{t,v}(\bm{X}_i;\widehat{\bm{\eta}}_{t,v})} Y_i.
\end{align*}
\end{algorithmic}
\end{algorithm}

\subsection{Inverse Probability Weighting Estimator}
By applying the EM algorithm described in the previous section, we can estimate the treatment assignment probability $e_t(\bm{X}_i)$, the version assignment probability $\pi_{t,v}(\bm{X}_i)$, and the posterior probability of the latent version $r_{t,v}(Y_i, \bm{X}_i)$.
Using these estimates, the expected value of the potential outcome for each latent version $\psi_{t,v} = \mathbb{E}[Y_i^{(t,v)}]$ can be estimated based on the IPW method.

Starting from the identification expression
\begin{align*}
\psi_{t,v}
&=
\mathbb{E}
\qty[
\frac{ 
 D_i^{(t)} Z_i^{(t,v)}
}{e_t(\bm{X}_i)\pi_{t,v}(\bm{X}_i)}
Y_i],\\
\end{align*}
we can calculate the expectation with respect to $(Y_i, \bm{X}_i, D_i^{(t)})$ as follows:
\begin{align*}
  \psi_{t,v}
  &=\mathbb{E}\qty[
  \frac{  D_i^{(t)} Z_i^{(t,v)}
  }{e_t(\bm{X}_i)\pi_{t,v}(\bm{X}_i)}
  Y_i]\\
  &=\mathbb{E}_{Y_i, \bm{X}_i, D_i^{(t)}}
  \qty[
  \frac{ D_i^{(t)} }{e_t(\bm{X}_i)\pi_{t,v}(\bm{X}_i)}
  Y_i \cdot \mathbb{E} \qty[Z_i^{(t,v)} \mid Y_i, \bm{X}_i, D_i^{(t)} ]
  ]\\
  &=\mathbb{E}
  \left[
  \frac{D_i^{(t)} r_{t,v}(Y_i, \bm{X}_i)}
      {e_t(\bm{X}_i)\pi_{t,v}(\bm{X}_i)}
  Y_i
  \right],
\end{align*}
because the inner conditional expectation equals to the definition of $r_{t,v}(Y_i, \bm{X}_i)$.
In this paper, we use this identification formula to estimate the average causal effect.

We construct the following plug-in Horvitz-Thompson estimator \citep{Horvitz1952-nw} by replacing the unknown probability functions 
$e_t(\bm{X}_i)$, $\pi_{t,v}(\bm{X}_i)$, and $r_{t,v}(Y_i, \bm{X}_i)$ with their estimates 
$e_t(\bm{X}_i; \widehat{\bm{\zeta}}_t)$, $\pi_{t,v}(\bm{X}_i; \widehat{\bm{\eta}}_{t,v})$, and ${r}_{t,v}(Y_i, \bm{X}_i; \bm{\widehat{\theta}}_t)$ respectively:
\begin{align}
\psi_{t,v}^{\text{HT}} (\widehat{\bm{\zeta}}_t, \widehat{\bm{\theta}}_t)
:=
\frac{1}{n}\sum_{i=1}^n
\frac{
  D_i^{(t)} {r}_{t,v}(Y_i, \bm{X}_i; \bm{\widehat{\theta}}_t)
  }
{
  e_t(\bm{X}_i; \widehat{\bm{\zeta}}_t)\pi_{t,v}(\bm{X}_i; \widehat{\bm{\eta}}_{t,v})
  } Y_i.
\end{align}
This estimator allows us to compute the average causal effect corresponding to each latent version. 
Furthermore, by taking a weighted average of the estimated $\widehat{\psi}_{t,v}^{\text{HT}}$ across all versions, the average causal effect for treatment $t$ can be computed as
\begin{align}
\widehat{\psi}_t 
= 
\sum_{v\in\mathcal{V}^{(t)}} \widehat{\pi}_{t,v}(\bm{X}_i) \widehat{\psi}_{t,v}.  
\end{align}
According to \citet{VanderWeele2013-lw}, unbiased estimation of the average causal effect for a treatment is generally difficult without adjusting for confounders that affect both the treatment and its underlying versions.
Because such versions are typically unobserved, any indirect causal effect mediated through these latent versions induces bias in the estimated the average causal effect.

In contrast, our proposed approach explicitly models the versions as latent variables and statistically reconstructs the version structure. 
This allows us to emulate a setting in which versions of treatment are observable.
As a result, we can identify and estimate average causal effects for a treatment without requiring additional adjustment for confounders that jointly affect the treatment and its versions.

\subsection{Identifiability of the MoE and Asymptotic Properties of the IPW Estimator}

When constructing estimators based on mixture models, it is essential to examine the identifiability of the parameters.  
In particular, the MoE model, which consists of both gating and expert functions, is known to involve more complex identifiability conditions than standard finite mixture models.  
\citet{Jiang1999-vw} systematically established sufficient conditions for the identifiability of MoE parameters and proposed methodological procedures to achieve unique parameter representations by resolving permutational invariance and translational invariance.  
In this study, we follow their theoretical framework and ensure model identifiability through similar procedures.

The main challenges to identifiability in the MoE model are as follows:
\begin{enumerate}
  \item permutational invariance: the labeling ambiguity of expert components.
  \item translational invariance: non-unique parameterization of the gating functions due to additive shifts in the coefficients.
\end{enumerate}

To address these invariances, we estimate the parameters $\bm{\theta}_{t}$ along with the following steps:

\begin{enumerate}
  \item Estimate parameters $\bm{\theta}_{t}$ by fixing the coefficient vector of the reference gating function to the zero vector $\bm{\eta}_{t,0} = \bm{0}$ for each $t\in\mathcal{T}$.
  \item After estimating parameters $\bm{\theta}_t$ by EM algorithm for each $t\in\mathcal{T}$, reorder the expert components by a deterministic lexicographic rule based on $\qty{\bm{\gamma}_{t,v}}_{v \in \mathcal{V}^{(t)}}$.
  \item As the ordering induces a permutation of component labels, re-fixing the gating parameters by resetting the gating coefficient of the reference component to zero.
\end{enumerate}
In the second step, the estimated expert components characterized by parameters $\bm{\gamma}_t$ for each treatment $t \in \mathcal{T}$ are sorted according to the following lexicographic rule:  
for any pair of components $j$ and $k$, we define $f_{t,j} \prec f_{t,k}$ if and only if  
\begin{itemize}
  \item $\gamma_{t,j,0} < \gamma_{t,k,0}$, or  
  \item $\gamma_{t,j,0} < \gamma_{t,k,0}$ and $\gamma_{t,j,1} < \gamma_{t,k,1}$, or $\ldots$
  \item $\gamma_{t,j,0} < \gamma_{t,k,0}$ and $\dots$ and $\gamma_{t,j,p} < \gamma_{t,k,p}$ for the smallest index $p$ such that $\gamma_{t,j,p} \neq \gamma_{t,k,p}$,
\end{itemize}
where $\gamma_{t,j,l}$ denotes the $l$-th element of the coefficient vector $\bm{\gamma}_{t,j}$.

Through these fixing, ordering, and re-fixing procedures, the model can be identified, and each component parameter is uniquely determined as guaranteed by Definition 2 and Theorem 2 of \citet{Jiang1999-vw}.  
When the identifiability is ensured, the standard large-sample properties of the MoE estimator can be established under the standard regularity conditions commonly assumed in finite mixture models.

Based on these identifiability results, we can establish the asymptotics of the proposed IPW estimator.
For the simplification of notation, we define
\begin{align}
  g_i(\bm{\zeta}_t, \bm{\theta}_t)
  = \frac{
  D_i^{(t)} {r}_{t,v}(Y_i, \bm{X}_i; \bm{\theta}_t)
  }
{
  e_t(\bm{X}_i; \bm{\zeta}_t)\pi_{t,v}(\bm{X}_i; \bm{\eta}_{t,v})
  } Y_i.
\end{align}
The plug-in Horvitz-Thompson estimator can be expressed as
\begin{align}
  \psi_{t,v}^{\text{HT}} (\widehat{\bm{\zeta}}_t, \widehat{\bm{\theta}}_t)
:= \frac{1}{n}\sum_{i=1}^n g_i(\widehat{\bm{\zeta}}_t, \widehat{\bm{\theta}}_t).
\end{align} 
We provide the following theorem, which establishes the consistency of the proposed IPW estimator.

\begin{theorem}
  \label{thm:consistency}
Fix a treatment $t$ and a version $v$.
Suppose that
\begin{enumerate}[label=(\arabic*)]
    \item Assumptions 1--4.
    \item $\Theta$ is a compact parameter space for $(\bm{\zeta}_t,\bm{\theta}_t)$.
    \item $\mathbb E[|Y_i|] < \infty$.
\end{enumerate}
Then, the plug-in Horvitz-Thompson estimator
$\psi_{t,v}^{\text{HT}} (\widehat{\bm{\zeta}}_t, \widehat{\bm{\theta}}_t)$ is a consistent estimator of $\psi_{t,v}$.
\end{theorem}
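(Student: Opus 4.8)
The plan is to prove consistency by combining (i) consistency of the plug-in parameter estimates $(\widehat{\bm\zeta}_t,\widehat{\bm\theta}_t)$, (ii) a uniform law of large numbers for the empirical average $n^{-1}\sum_{i=1}^n g_i(\bm\zeta,\bm\theta)$ over the compact set $\Theta$, and (iii) continuity of the population functional $(\bm\zeta,\bm\theta)\mapsto\mathbb E[g_i(\bm\zeta,\bm\theta)]$. Write $(\bm\zeta_t^{\star},\bm\theta_t^{\star})$ for the true parameter values. Under correct specification of the MoE model, the fitted responsibility evaluated at $\bm\theta_t^{\star}$ equals the true posterior $r_{t,v}(Y_i,\bm X_i)$, so the identification identity derived in Section~3.3 gives $\mathbb E[g_i(\bm\zeta_t^{\star},\bm\theta_t^{\star})]=\psi_{t,v}$. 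It then suffices to show $n^{-1}\sum_{i=1}^n g_i(\widehat{\bm\zeta}_t,\widehat{\bm\theta}_t)\xrightarrow{p}\mathbb E[g_i(\bm\zeta_t^{\star},\bm\theta_t^{\star})]$.

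First I would establish an integrable envelope for the summands. By Assumption~\ref{assumption-Positivity} together with the multinomial-logit specifications \eqref{eq:version_assignment_model} and \eqref{eq:treatment_assignment_model} and compactness of $\Theta$ (with the covariate support bounded, or an explicit such bound assumed), there is a constant $\varepsilon>0$ such that $e_t(\bm X_i;\bm\zeta_t)\ge\varepsilon$ and $\pi_{t,v}(\bm X_i;\bm\eta_{t,v})\ge\varepsilon$ uniformly over $\Theta$, almost surely. Since $0\le r_{t,v}\le 1$ and $0\le D_i^{(t)}\le 1$, this gives the envelope $\sup_{(\bm\zeta,\bm\theta)\in\Theta}|g_i(\bm\zeta,\bm\theta)|\le\varepsilon^{-2}|Y_i|$, which is integrable by the moment condition $\mathbb E[|Y_i|]<\infty$. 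In particular $\mathbb E[g_i(\bm\zeta_t^{\star},\bm\theta_t^{\star})]$ is finite, and dominated convergence makes $(\bm\zeta,\bm\theta)\mapsto\mathbb E[g_i(\bm\zeta,\bm\theta)]$ continuous on $\Theta$.

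Next I would invoke a uniform law of large numbers. The map $(\bm\zeta,\bm\theta)\mapsto g_i(\bm\zeta,\bm\theta)$ is continuous on $\Theta$: the gating and treatment-assignment probabilities are smooth in their parameters and bounded below by $\varepsilon$; the Gaussian densities $f_{t,v}$ are smooth in $(\bm\beta_{t,v},\sigma_{t,v})$ with $\sigma_{t,v}$ bounded away from $0$ on the compact $\Theta$; and $r_{t,v}$ is a ratio of such densities with denominator bounded below. A class of continuous functions on a compact set possessing an integrable envelope satisfies the Jennrich uniform law of large numbers, so $\sup_{(\bm\zeta,\bm\theta)\in\Theta}\big|n^{-1}\sum_{i=1}^n g_i(\bm\zeta,\bm\theta)-\mathbb E[g_i(\bm\zeta,\bm\theta)]\big|\xrightarrow{p}0$. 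Consistency of $(\widehat{\bm\zeta}_t,\widehat{\bm\theta}_t)$ follows from standard maximum-likelihood consistency for the treatment-assignment logit and for the MoE incomplete-data likelihood, the latter being identified through the fixing, ordering, and re-fixing procedure of Section~3.4 (Definition~2 and Theorem~2 of \citet{Jiang1999-vw}). Combining these via the triangle inequality
\[
\big|\widehat\psi_{t,v}^{\mathrm{HT}}-\psi_{t,v}\big|
\le\sup_{(\bm\zeta,\bm\theta)\in\Theta}\Big|\tfrac1n\textstyle\sum_{i=1}^n g_i(\bm\zeta,\bm\theta)-\mathbb E[g_i(\bm\zeta,\bm\theta)]\Big|
+\big|\mathbb E[g_i(\widehat{\bm\zeta}_t,\widehat{\bm\theta}_t)]-\mathbb E[g_i(\bm\zeta_t^{\star},\bm\theta_t^{\star})]\big|,
\]
the first term vanishes by the ULLN and the second by continuity of the population functional at $(\bm\zeta_t^{\star},\bm\theta_t^{\star})$ together with consistency of the estimates, so $\widehat\psi_{t,v}^{\mathrm{HT}}\xrightarrow{p}\psi_{t,v}$.

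I expect the main obstacle to be the link between the EM-based estimation and the asymptotic identifiability claim: one must argue that, after the deterministic lexicographic re-ordering of expert components and the re-fixing of the reference gating coefficient, the estimator converges to the true parameter (up to the now-resolved label permutation) rather than to a spurious stationary point of the EM map, and that this reordering is measurable and asymptotically stable. A secondary delicate point is the uniform lower bound $\varepsilon$ on the fitted assignment probabilities over all of $\Theta$; compactness of $\Theta$ controls the parameters, but integrability of the envelope $\varepsilon^{-2}|Y_i|$ ultimately needs a bound on the covariate support (or a condition slightly stronger than those listed). The ULLN and continuity steps are then routine.
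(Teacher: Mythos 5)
Your proposal follows essentially the same route as the paper's proof: consistency of $(\widehat{\bm\zeta}_t,\widehat{\bm\theta}_t)$ via the Jiang--Tanner identifiability result and standard ML theory, the identical triangle-inequality decomposition into a uniform-LLN term over the compact $\Theta$ and a continuity term handled by the continuous mapping theorem, and the same integrable envelope $|Y_i|$ divided by uniform lower bounds on the assignment probabilities (the paper's $c_e c_\pi$ is your $\varepsilon^2$, and its Newey--McFadden Lemma 2.4 is the Jennrich-type ULLN you invoke). Your added caveats --- that the uniform lower bound on the fitted propensities really needs a bounded covariate support, and that $\mathbb E[g_i(\bm\zeta_t^{\star},\bm\theta_t^{\star})]=\psi_{t,v}$ rests on correct specification plus the identification identity --- are points the paper passes over silently, so they sharpen rather than diverge from its argument.
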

The proof is given in Appendix \ref{sec:Proof_of_Theorem1}. 
By the continuous mapping theorem, any continuous function of $\{\psi_{t,v}\}_{v \in \mathcal V^{(t)}}$ for $t \in \mathcal{T}$ can be also consistently estimated via plug-in estimation.
In particular, the following results hold
\begin{align}
  \widehat{\psi}_t 
  &= 
  \sum_{v\in\mathcal{V}^{(t)}} \widehat{\pi}_{t,v}(\bm{X}_i) \psi_{t,v}^{\mathrm{HT}}(\widehat{\bm{\zeta}}_t,\widehat{\bm{\theta}}_t)
      \xrightarrow{p} \psi_t,\\
  \widehat{\Delta}_{(t,v),(t',v')}
  &=
  \psi_{t,v}^{\mathrm{HT}}(\widehat{\bm{\zeta}}_t,\widehat{\bm{\theta}}_t)
    - \psi_{t',v'}^{\mathrm{HT}}(\widehat{\bm{\zeta}}_{t'},\widehat{\bm{\theta}}_{t'})
    \xrightarrow{p} \Delta_{(t,v),(t',v')},
\end{align}
for any two treatment-version pairs $(t,v)$ and $(t',v')$.
Hence, our framework yields consistent estimators not only for version-specific causal effects but also for overall average causal effects for a treatment.

%
%
\section{Simulation Study}

We evaluate the effectiveness of the proposed method via Monte Carlo simulations.
Multiple datasets are generated according to the models \eqref{eq:version_assignment_model} and \eqref{eq:treatment_assignment_model} in our simulations.

\subsection{Setup}

For each replication of the dataset, we first generated $n$ independent observations $\qty{\bm{X}_i}_{i=1}^n$ from a $p$-dimensional standard normal distribution $\mathcal{N}(\bm{0}, I_p)$.
We then conducted simulations under correctly specified models, as described below.
The treatment $T_i$ was generated according to the following treatment assignment model:
\begin{align}
e_{t}(\bm{X}_i ; \bm{\zeta}_{t}^*)
= \frac{
  \exp \qty( {\bm{\zeta}_{t}^*}^\top \bm{X}_i^\top)
}{
  \sum_{u \in \mathcal{T}} \exp\qty( {\bm{\zeta}_{u}^*}^\top \bm{X}_i )
}, \quad \text{ for all } t \in \mathcal{T}, 
\end{align}
where $\bm{\zeta}_t^*$ is a true parameter for all $t \in \mathcal{T}$. 
Then, conditional on $T_i=t$, we assigned the version $V_i^{(t)}$ according to the version assignment model:
\begin{align}
\pi_{t,v}(\bm{X}_i ; \bm{\eta}_{t,v}^*)
= \frac{
  \exp \qty({\bm{\eta}_{t,v}^*}^\top \bm{X}_i)
}{
  \sum_{u \in \mathcal{V}^{(t)}} \exp \qty( {\bm{\eta}_{t, u}^*}^\top \bm{X}_i  )
}, \quad \text{ for all } v \in \mathcal{V}^{(t)},
\end{align}
where $\bm{\eta}_{t,v}^*$ is the true parameter for all $v \in \mathcal{V}^{(t)}$.
Both the treatment assignment model $e_t(\bm{X}_i ; \bm{\zeta}_t^*)$ and the version assignment model $\pi_{t,v}(\bm{X}_i ; \bm{\eta}_{t,v}^*)$ were correctly specified settings.
Finally, we generated the outcome $Y_i$ from the following model:
\begin{align}
    Y_i &= \tilde{Y}_i + \varepsilon_i, \quad \varepsilon_i \sim \mathcal{N}(0, \sigma^2),\\
    \tilde{Y}_i &= \sum_{t \in \mathcal{T}} \sum_{v \in \mathcal{V}^{(t)}}  Z_i^{(t, v)} Y_i^{(t,v)},\\
    Y_i^{(t,v)} &= \bm{X}_i^\top \bm{\beta}_{t,v}^* + \psi_{t,v}^*,
\end{align}
where $\varepsilon_i$ is an error term, $\bm{\beta}_{t,v}^*$ is the true parameter for the outcome model, and $\psi_{t,v}$ is the true causal effect when the observation is treated with the treatment $t$ and the version $v$.
To fix a prespecified signal-to-noise ratio (SNR), we set the noise scale $\sigma = \sqrt{\mathbb{V}(\tilde{Y}_i)}/\mathrm{SNR}$.

We considered the true parameters for the treatment, version assignment models, and the outcome model.
We modeled the multinomial logistic model for $T_i$.
The coefficients $\bm{\zeta}_0$ were set to $\bm{0}$.
For each treatment $t = 1, \dots, J-1$, we set exactly two coordinates of $\bm{\zeta}_t$ with opposite signs.
Using cyclic indices $j_{1(t)} = 1 + \qty(2t - 2 \bmod p), j_{2(t)} = 1 + \qty(2t - 1 \bmod p)$, we set $\qty(\bm{\zeta}_t^*)_{j_{1(t)}} = 2.0, \qty(\bm{\zeta}_t^*)_{j_{2(t)}} = -2.0$, and all remaining entries to zero.

For each treatment $t$, we used a multinomial logistic model with version $v=0$ having zero coefficients: $\bm{\eta}_{t, 0}^* = \bm{0}$.
To ensure that the version assignment model primarily uses covariate coordinates disjoint from those that was used in the treatment assignment model, we introduced an offset $2(J-1)$, corresponding to the $2(J-1)$ coordinates already used to distinguish treatments $t = 1, \dots, J-1$.
For each version $v = 1, \dots, J^{(t)}-1$, we also set two coordinates of $\bm{\eta}_{t,v}$.
Using cyclic indices $j_{1(t,v)} = 1 + \text{offset} + (2v - 2 \bmod p), j_{2(t,v)} = 1 + \text{offset} + (2v - 1 \bmod p),$ we set $\qty(\bm{\eta}_{t,v}^*)_{j_1(t,v)} = 2.0, \qty(\bm{\eta}_{t,v}^*)_{j_2(t,v)} = 2.0$ and all remaining entries to zero.

For the true coefficient parameter of outcomes, we defined a common covariate effect shared across all the treatments and versions as the unit vector:
$$
\bm{\beta}_{\text{common}}
= \frac{1}{\sqrt{p}} (1,1,\dots,1)^\top.
$$
To induce identifiability of the MoE, we introduced small version-specific perturbations to $\bm{\beta}_{\text{common}}$.
Let $\qty{\bm{b}_0, \dots, \bm{b}_{p-1}}$ denote the standard basis of $\mathbb R^p$.
For each treatment-version pair $(t,v)$, we defined $ \bm{\beta}_{t,v}^*
= \bm{\beta}_{\text{common}} + 0.2 \bm{b}_v$.
Moreover, the outcome model included true causal effects $\psi_{t,v}^*$, which are assigned in lexicographic order over $(t,v)$.
The value of causal effects starts at $1.0$ and increases by $1.0$ for each subsequent pair:
\begin{align}
    \psi_{0,0}^* = 1.0, \quad
    \psi_{0,1}^* = 2.0, \quad
    \psi_{1,0}^* = 3.0, \quad
    \psi_{1,1}^* = 4.0, \quad
    \psi_{2,0}^* = 5.0, \quad
    \psi_{2,1}^* = 6.0, \quad \dots
\end{align}

\subsection{Performance Evaluation}

We conducted $M = 100$ simulations under some configurations of the ground-truth data generating process.
Across all the configurations, we varied the sample size $n \in \qty{500, 1000, 2000}$, the covariate dimension $p \in \qty{10, 20}$, and a signal-to-noise ratio $\mathrm{SNR} \in \qty{5, 10}$.
We further considered the number of treatments and versions.
We set the number of treatment as $J \in \qty{2, 3}$.
For each value of $J$, we defined the version structure $(J^{(1)}, J^{(2)}, \dots)$ accordingly.
When $J=2$, the version structure was given by $(J^{(1)}, J^{(2)}) \in \qty{(2,2), (3,3)}$, while we considered $(J^{(1)}, J^{(2)}, J^{(3)}) \in \qty{(2,2,2), (3,3,3)}$ when $J=3$.

To connect the causal estimands, we considered the expectation of the potential outcome under the treatment-version pair $(t, v)$ as
\begin{align}
  \mathbb{E}\qty[Y_i^{(t,v)}] &= \mathbb{E}\qty[\bm{X}_i]^\top \bm{\beta}_{t,v}^* + \psi_{t,v}^*\\
  &=\psi_{t,v}^* \quad (\because \mathbb{E}\qty[\bm{X}_i] = \bm{0}).
  \label{eq:psi-equals-marginal-mean}
\end{align}
Therefore, the true average causal effect between two versions is $\Delta_{(t,v),(t,v')}^* = \psi_{t,v'}^* - \psi_{t,v}^*$.

For evaluating estimation accuracy, we used the average causal effect as the contrast between treatment-version pairs defined as \eqref{eq:def-average-causal-effect}.
The average causal effect can be calculated by the differences of Horvitz-Thompson estimates as $\widehat{\Delta}_{(t, v), (t, v')}^{\text{HT}, (m)} = \widehat{\psi}^{\text{HT}, (m)}_{t, v'} - \widehat{\psi}^{\text{HT}, (m)}_{t,v}$ for any two treatment-version pairs $(t, v)$ and $(t, v')$.
Here, the superscript of $(m)$ means the estimates at the $m$-th simulation.
In the simulation studies, we reported the bias and standard deviation (SD) for any two treatment-version pairs $(t, v)$ and $(t, v')$, given by
\begin{align*}
\mathrm{Bias}
&=
\frac{1}{M}
\sum_{m=1}^M
\qty(
\widehat{\Delta}^{\text{HT},(m)}_{(t,v), (t, v')}
- \Delta_{(t,v), (t, v')}^*
), \quad 
\mathrm{SD} =  
\qty[
\frac{1}{M}
\sum_{m=1}^M
\qty(\widehat{\Delta}^{\text{HT},(m)}_{(t,v), (t, v')} - \overline{\Delta}^{\text{HT}}_{(t,v), (t, v')})^2
]^{\frac{1}{2}},\\
\overline{\Delta}^{\text{HT}}_{(t,v), (t, v')}
&=
\frac{1}{M}
\sum_{m=1}^M
\widehat{\Delta}^{\text{HT},(m)}_{(t,v), (t, v')}.
\end{align*}
These evaluators clarify whether our proposed method works well under the correctly specified model.

The simulation results for the setting with $J=2$ and version structure $(2,2)$ are reported in Table  \ref{tab:MonteCarloSimulations}, and the results are visualized as boxplots in Figure \ref{fig:MonteCarloSimulations}.

As shown in Table \ref{tab:MonteCarloSimulations} and Figure \ref{fig:MonteCarloSimulations}, the estimates of proposed method are close to the true values across all scenarios considered.
Moreover, for each treatment-version contrast between $(t,v)$ and $(t,v')$, increasing the sample size $n$ leads to reducing bias and standard errors.
These results indicate that the proposed method has the expected asymptotic behavior with respect to sample size.
Increasing the SNR also reduces the bias. It implies that clearer separation of latent versions improves the accuracy of version-specific causal effect estimation.
In contrast, a larger SNR does not yield a uniform reduction of SD across all settings.
Finally, when the covariate dimension becomes larger, the finite-sample bias tends to increase relative to low-dimensional settings.
These results suggest that stable estimation under more complex situations typically requires a larger sample size and a sufficiently high SNR.

Since the remaining simulation settings exhibit qualitatively similar trends, their detailed results are summarized in the  supplementary materials S1.
Consequently, these results demonstrate that the proposed method exhibits stable estimation performance even in finite samples and delivers highly accurate estimates when the latent version structure is sufficiently identifiable.

\begin{table}[htbp]
  \centering
    \caption{Results of Monte Carlo simulations for $J=2$ with version structure $(2,2)$.}
    \label{tab:MonteCarloSimulations}
  \begin{minipage}[t]{0.45\textwidth}
    \begin{tabular}{llllll}
    \toprule
    $p$ & SNR & $n$ & $(t,v), (t,v')$ & Bias & SD  \\
    \midrule
    \multirow[t]{12}{*}{10} & \multirow[t]{6}{*}{5} & \multirow[t]{2}{*}{500} & (0, 0),(0, 1) & -0.502616 & 0.453546 \\
     &  &  & (1, 0),(1, 1) & -0.496383 & 0.491002 \\
     &  & \multirow[t]{2}{*}{1000} & (0, 0),(0, 1) & -0.317267 & 0.420311 \\
     &  &  & (1, 0),(1, 1) & -0.365345 & 0.440226 \\
     &  & \multirow[t]{2}{*}{2000} & (0, 0),(0, 1) & -0.157735 & 0.277096 \\
     &  &  & (1, 0),(1, 1) & -0.208446 & 0.332872 \\
    \cmidrule{2-6}
     & \multirow[t]{6}{*}{10} & \multirow[t]{2}{*}{500} & (0, 0),(0, 1) & -0.184443 & 0.415674 \\
     &  &  & (1, 0),(1, 1) & -0.213792 & 0.567841 \\
     &  & \multirow[t]{2}{*}{1000} & (0, 0),(0, 1) & -0.065192 & 0.287996 \\
     &  &  & (1, 0),(1, 1) & -0.149397 & 0.461156 \\
     &  & \multirow[t]{2}{*}{2000} & (0, 0),(0, 1) & -0.085327 & 0.266137 \\
     &  &  & (1, 0),(1, 1) & -0.111844 & 0.293243 \\
    \bottomrule
    \end{tabular}
  \end{minipage}
  \hfill
  \begin{minipage}[t]{0.45\textwidth}
    \centering
    \begin{tabular}{llllll}
    \toprule
    $p$ & SNR & $n$ & $(t,v), (t,v')$ & Bias & SD  \\
    \midrule
    \multirow[t]{12}{*}{20} & \multirow[t]{6}{*}{5} & \multirow[t]{2}{*}{500} & (0, 0),(0, 1) & -0.843029 & 0.362044 \\
     &  &  & (1, 0),(1, 1) & -0.872900 & 0.651796 \\
     &  & \multirow[t]{2}{*}{1000} & (0, 0),(0, 1) & -0.584633 & 0.400126 \\
     &  &  & (1, 0),(1, 1) & -0.640676 & 0.396309 \\
     &  & \multirow[t]{2}{*}{2000} & (0, 0),(0, 1) & -0.306299 & 0.332593 \\
     &  &  & (1, 0),(1, 1) & -0.421939 & 0.383613 \\
    \cmidrule{2-6}
     & \multirow[t]{6}{*}{10} & \multirow[t]{2}{*}{500} & (0, 0),(0, 1) & -0.632034 & 0.505997 \\
     &  &  & (1, 0),(1, 1) & -0.472874 & 0.608518 \\
     &  & \multirow[t]{2}{*}{1000} & (0, 0),(0, 1) & -0.195369 & 0.272398 \\
     &  &  & (1, 0),(1, 1) & -0.308065 & 0.499020 \\
     &  & \multirow[t]{2}{*}{2000} & (0, 0),(0, 1) & -0.150087 & 0.202736 \\
     &  &  & (1, 0),(1, 1) & -0.220190 & 0.316836 \\
    \bottomrule
    \end{tabular}
  \end{minipage}
\end{table}

\begin{figure}[htbp]
    \centering
    \includegraphics[keepaspectratio, width=\linewidth]{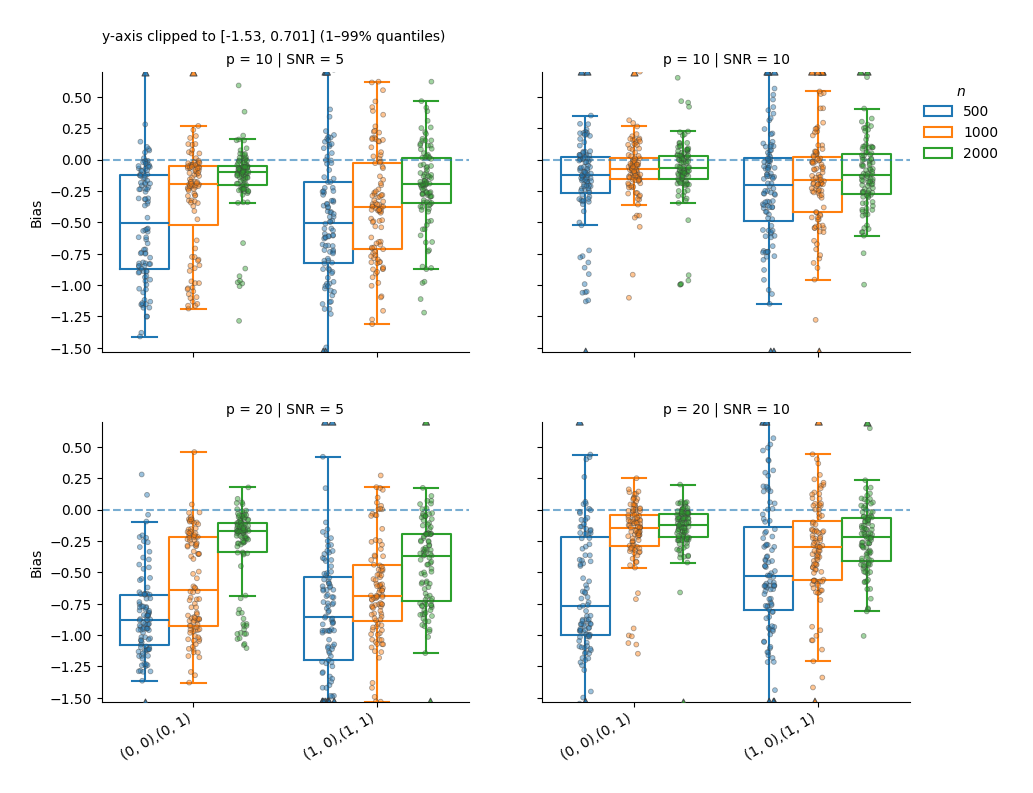}
    \caption{
      Distribution of the bias across Monte Carlo replications.
      For two treatment-version pairs $(t, v)$ and $(t, v')$, the boxplots summarize the bias stratified by sample size $\{500, 1000, 2000\}$, SNR $\{5, 10\}$, and the covariate dimension $\{ 10, 20 \}$.
      Points represent individual Monte Carlo replications. The dashed horizontal line indicates zero bias.
      This plot corresponds to the configuration $J=2$ with version structure $(2,2)$.
    }
    \label{fig:MonteCarloSimulations}
\end{figure}

%
%
\section{Empirical Analysis}

\subsection{Dataset Description}

Our empirical analysis demonstrates the effectiveness of our proposed method by applying the Project STAR (Student-Teacher Achievement Ratio) dataset which is referred to as \texttt{STAR} dataset found in the CRAN package \texttt{AER} \citep{Kleiber2008-qu}.
The Project STAR dataset is a randomized controlled trial and is widely used in education and causal inference research.
It was conducted in the state of Tennessee, United States, during the late 1980s to evaluate the causal effect of class size on students' academic performance.
The longitudinal survey began in 1985 and followed approximately 7,000 students in 79 schools.
Each student was randomly assigned to one of three treatment conditions in kindergarten:
\begin{enumerate}
  \item \textbf{Small class:} 13--17 students per teacher,
  \item \textbf{Regular class:} 22--25 students per teacher,
  \item \textbf{Regular class with aide:} a regular-size class assisted by a full-time teacher's aide.
\end{enumerate}
Their students were followed from kindergarten through the third grade.
The teachers were also randomly assigned to classrooms, thereby minimizing potential confounding between treatment assignment and teacher characteristics.

The Project STAR dataset contains 11,598 observations and 47 variables.
For each student, the data include background characteristics and mathematics scores from the Stanford Achievement Test.
Covariates of teacher and school are also provided with some characteristics.
Table \ref{tab:star_variables} summarizes the main variables used in the analysis.

In our analysis, we applied the following preprocessing steps to the Project STAR dataset.
First, missing values in the covariates, treatment, and outcome were excluded, and the analysis was conducted using complete cases only.
Next, we computed category proportions separately within each treatment group for categorical covariates.
We labeled a category as rare when its empirical proportion was less than or equal to $5\%$ in at least one treatment.
Observations containing such rare categories were removed from the dataset.
This procedure was implemented to avoid the estimation instability arising from extremely sparse categories within treatment, such as a singularity of the design matrix.
We standardized numerical covariates to mean zero and unit variance.
Finally, categorical covariates were transformed using one-hot encoding with one reference category omitted for each variable.

\begin{table}[htbp]
\centering
\caption{Summary of used variables in our analysis.}
\label{tab:star_variables}
\begin{tabularx}{\textwidth}{l l X}
\toprule
\textbf{Category} & \textbf{Variables} & \textbf{Description} \\
\midrule
Student characteristics 
& \texttt{gender}, \texttt{ethnicity}, \texttt{lunchk} 
& Gender, ethnicity, and eligibility for free lunch\\

Academic performance (Outcome)
& \texttt{mathk} 
& Mathematics scores from the Stanford Achievement Test by year\\

Treatment assignment (Treatment)
& \texttt{stark} 
& Class type indicator (small class, regular class, or regular class with aide)\\

Teacher characteristics 
& \texttt{degreek}, \texttt{ladder}, \texttt{experiencek}, \texttt{tethnicityk} 
& Teacher's degree level, career ladder position, years of teaching experience, and ethnicity\\

School characteristics 
& \texttt{schoolk}
& Kind of environment in which the kindergarten is located\\
\bottomrule
\end{tabularx}
\end{table}

Recent studies have revisited the experiment and noted that the realized class sizes often varied across schools even within the same treatment \citep{Adusumilli2024-iv}.
For instance, while the regular class condition was designed to include 22-25 students per teacher, the actual class sizes were determined by some reasons such as schoolroom capacity constraints and attrition from participating schools.
These findings imply that the treatment may include multiple versions of treatment, and they reflect the latent structural heterogeneity in teacher practices, classroom compositions, and overall learning environments.
Since we are motivated by this observation, we apply our proposed method and estimate version-specific causal effects within each treatment.

In light of the discussion in \citet{Adusumilli2024-iv}, our empirical analysis focuses on the subsamples at a kindergarten.
This restriction is based on the fact that many students switched class types due to attritions, behavioral issues, or parental complaints in later grades.
Such noncompliance and reassignment complicate causal identification of the original randomized intervention.

\subsection{Empirical Results}
By applying the proposed method to the subsamples of a kindergarten, we aim to recover the version-specific causal effects that arise from unobserved variations.
In the empirical analysis, we consider two candidates of the latent version structure,
\begin{align*}
  \bigl(J^{(\text{small})}, J^{(\text{regular})}, J^{(\text{regular+aide})}\bigr)
\in \{(2,2,2), (3,3,3)\}.
\end{align*}
We generated 100 bootstrap resamples and estimated $\psi_{t,v}$ by using our proposed method.
For each structure, we report nonparametric bootstrap $95\%$ confidence intervals for all the estimates to quantify uncertainty.
The results of version-specific causal effect estimates are summarized in Tables \ref{tab:empirical-analysis-J_222} and \ref{tab:empirical-analysis-J_333}.
To summarize the uncertainty in the estimated coefficients of the gating models, we report bootstrap $95\%$ confidence intervals in the form of forest plots, shown in Figures \ref{fig:empirical-analysis-222} and \ref{fig:empirical-analysis-333}. 
The additional results are provided in the supplementary materials S2.

Tables \ref{tab:empirical-analysis-J_222} and \ref{tab:empirical-analysis-J_333} show that the estimated average outcomes differ across latent versions, and that the \texttt{small} class consistently attains higher estimated average outcomes than \texttt{regular} or \texttt{regular+aide}.
These results indicate that unobserved heterogeneity in the implementation of treatment may exist in the treatments.
Note that, in all cases, the overlap of the $95\%$ confidence intervals for the version-specific estimates indicates finite-sample uncertainty.
Accordingly, the results do not provide the conclusion of the effect of the \texttt{small} class differs significantly across latent versions.
A heterogeneity we often overlook in the average treatment effect becomes apparent when latent versions are explicitly considered.

\begin{table}[htbp]
\centering
\caption{IPW estimates with bootstrap 95\% CI under the version structure $(2, 2, 2)$.}
\label{tab:empirical-analysis-J_222}
\begin{tabular}{lccc}
\toprule
Treatment & version 0 & version 1 \\
\midrule
small & 482.250 [457.570, 511.973] & 503.483 [467.004, 582.351] \\
regular & 476.445 [452.616, 501.116] & 471.390 [446.781, 506.760] \\
regular+aide & 478.822 [456.566, 521.112] & 488.744 [445.477, 541.403] \\
\bottomrule
\end{tabular}
\end{table}

\begin{table}[htbp]
\centering
\caption{IPW estimates with bootstrap 95\% CI under the version structure $(3, 3, 3)$.}
\label{tab:empirical-analysis-J_333}
\begin{tabular}{lcccc}
\toprule
Treatment & version 0 & version 1 & version 2 \\
\midrule
small & 477.546 [452.736, 513.529] & 487.787 [451.849, 530.566] & 497.456 [452.715, 542.580] \\
regular & 467.594 [444.109, 504.768] & 475.133 [444.235, 521.460] & 470.785 [440.072, 516.305] \\
regular+aide & 473.807 [454.461, 512.279] & 482.914 [451.616, 516.049] & 480.961  [445.563, 515.144] \\
\bottomrule
\end{tabular}
\end{table}

Next, Figures \ref{fig:empirical-analysis-222} and \ref{fig:empirical-analysis-333} show how version assignments relate to observed covariates.
Under both the latent version structures $(2,2,2)$ and $(3,3,3)$, the estimated coefficients associated with versions $1$ and $2$ are generally concentrated near zero.
The coefficients for versions $1$ and $2$ are similar in the latent version structure $(3,3,3)$.
These results suggest that assignment to latent versions is not strongly explained by the observed covariates alone, at least for the present data and model specifications.
Moreover, little differences between versions $1$ and $2$ indicate that the number of latent versions may be overspecified.

On the other hand, interpreting the results based on the average tendencies of the estimated coefficients in the gating model reveals consistency with existing research.
In the \texttt{small} class, versions $1$ and $2$ tend to be more frequently assigned to classrooms characterized by teachers with a bachelor’s degree, fewer years of teaching experiences, female students, and apprentices in their career level.
The corresponding IPW estimates indicate that the average math score of these versions are higher than that of version $0$.
Taken together, these tendencies suggest that versions $1$ and $2$ correspond to a restricted educational environment, and can be interpreted as the implementation of the treatment which achieves a higher math score.
This interpretation is consistent with existing evidence from Project STAR.
In particular, \citet[p.\,2]{Adusumilli2024-iv} report that ``Nearly all of the gains from reducing class size in STAR are driven by $29\%$ of the schools in the sample.''
This finding implies that the effectiveness of the reduction of class size may be concentrated in a subset of schools.
From this perspective, our version-specific findings can be viewed as complementary to and consistent with prior empirical results.

\begin{figure}[htbp]
    \centering
    \includegraphics[keepaspectratio, width=\linewidth]{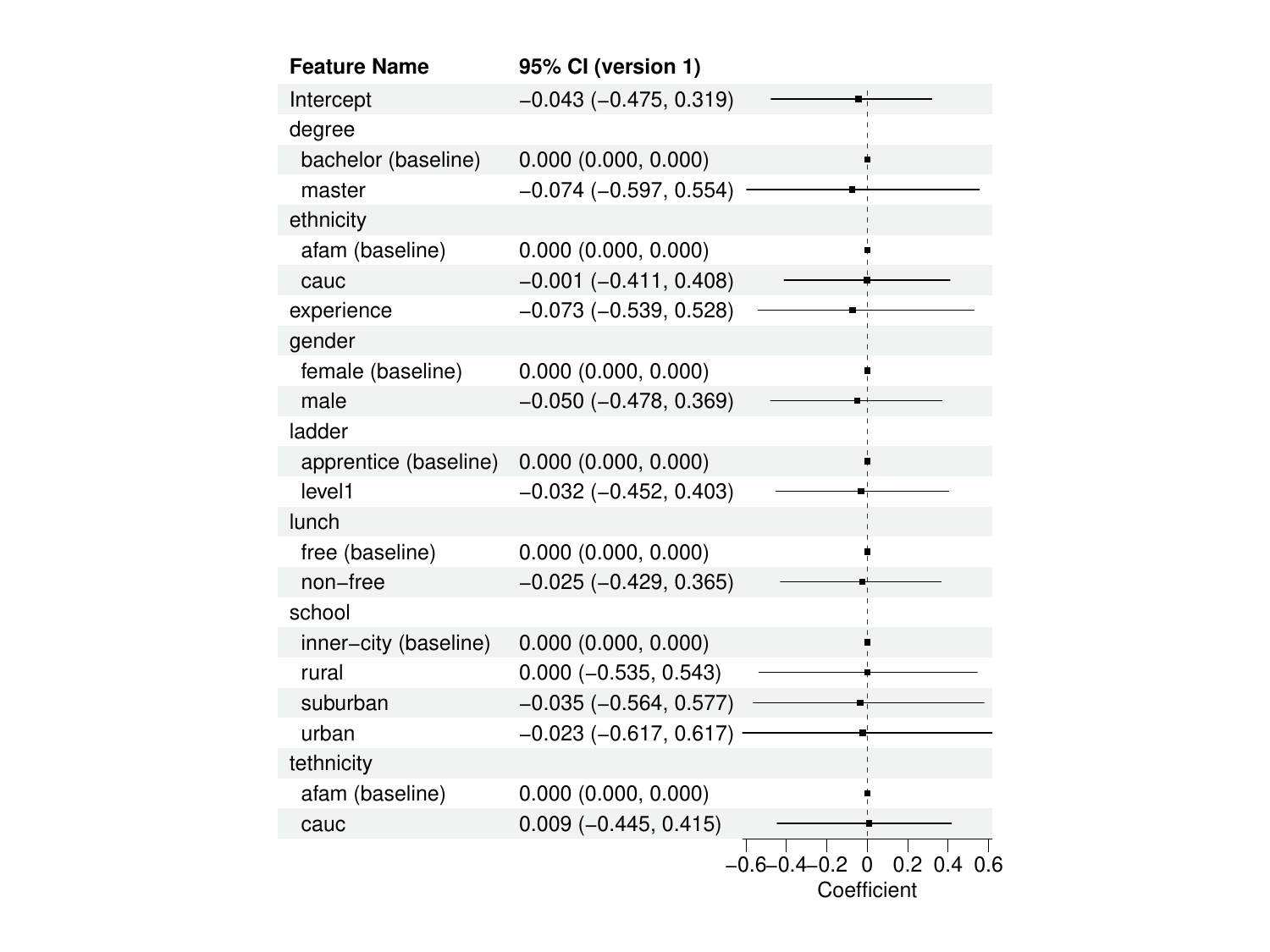}
    \caption{
    Forest plot of the estimated coefficients of the gating model for the latent version $1$ in the \texttt{small} class.
    Points denote mean estimates, and horizontal bars represent bootstrap $95\%$ confidence intervals based on 100 bootstrap resamples.
    }
    \label{fig:empirical-analysis-222}
\end{figure}

\begin{figure}[htbp]
    \centering
    \includegraphics[keepaspectratio, width=\linewidth]{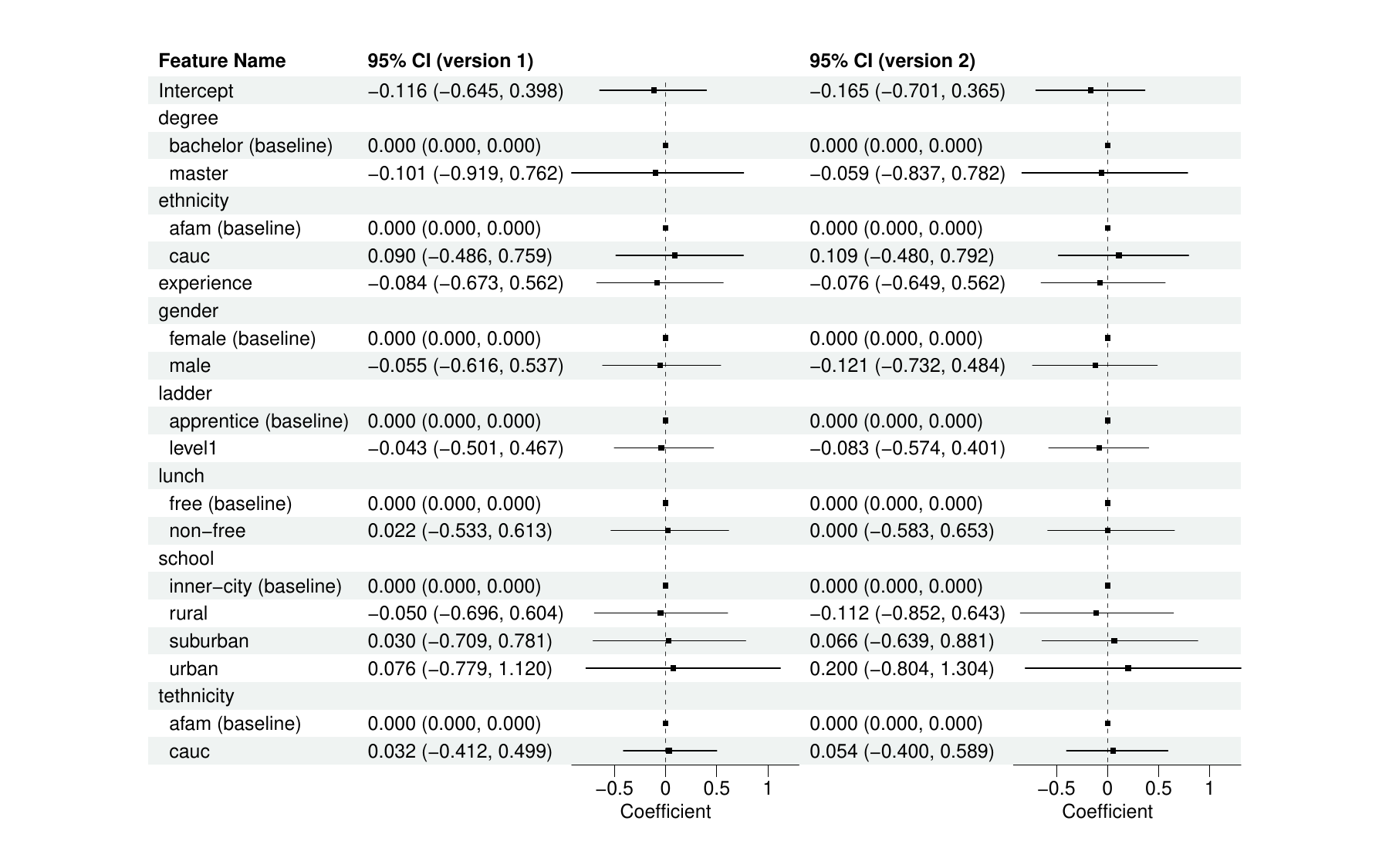}
    \caption{
    Forest plot of the estimated coefficients of the gating model for the \texttt{small} class. The plot is compared with latent versions $1$ and $2$.
    Points denote mean estimates, and horizontal bars represent bootstrap $95\%$ confidence intervals based on 100 bootstrap resamples.
    }
    \label{fig:empirical-analysis-333}
\end{figure}

%
%
\section{Discussion and Extensions}

We have developed a new methodological framework for estimating causal effects of the latent versions by introducing the MoE framework.
This approach enables us to estimate version-specific causal effects even if the underlying versions are unobserved.
Specifically, our method statistically recovers the structure of latent versions flexibly characterized by the MoE model.
Consequently, the proposed framework allows a deeper understanding of the mechanisms of complex interventions.
Another advantage of our framework is that the issue of identifiability can be reframed as the identifiability of the MoE model.
This perspective provides new insights into bias reduction and stable estimation.

A limitation of the present study is that we assume the outcome, treatment, and version model are correctly specified.
An important direction for future work is to develop a doubly robust (DR) estimator by extending our method.
The usual DR estimator has the property of robustness, which remains consistent when either the outcome model or the treatment assignment model is correctly specified.
In the mixture modeling framework, however, model misspecification may affect the accuracy of posterior estimates, and its errors can propagate across models.
Therefore, the usual robustness property of DR estimators does not trivially hold.
A theoretical characterization of this mutual dependence remains an important topic for future research.


In practical applications, standard techniques such as information criteria or cross-validation may be employed to select the number of latent versions.
However, we did not conduct a theoretical analysis for selecting the number because the primary aim of this study is to propose a new estimation framework.
Establishing connections with existing work on selecting the numbers remains an important task for future exploration.

\section*{Acknowledgments}
S.K. was supported by JSPS KAKENHI grants JP23K11008, JP23H03352, JP23H00809, and JP25H01107. 

\appendix

\section{Proof of Proposition \ref{prop:propensity-score-theorem}}
\label{sec:Proof_of_Proposition1}

\begin{proof}
  \begin{align*}
  \Pr(W_i^{(t,v)}=1 \mid Y_i^{(t,v)},  p_{t,v}\qty(\bm{X}_i) ) &= \mathbb{E} \qty[W_i^{(t,v)} \mid Y_i^{(t,v)}, p_{t,v}\qty(\bm{X}_i) ]\\
  &= \mathbb{E} \qty[ \mathbb{E} \qty[W_i^{(t,v)} \mid Y_i^{(t,v)}, p_{t,v}\qty(\bm{X}_i), \bm{X}_i ] \mid Y_i^{(t,v)}, p_{t,v}\qty(\bm{X}_i) ]\\
  &= \mathbb{E} \qty[ \mathbb{E} \qty[W_i^{(t,v)} \mid \bm{X}_i ] \mid Y_i^{(t,v)},p_{t,v}\qty(\bm{X}_i) ] \quad (\because \text{weak conditonal exchangeability})\\
  &= \mathbb{E} \qty[p_{t,v}\qty(\bm{X}_i) \mid Y_i^{(t,v)}, p_{t,v}\qty(\bm{X}_i) ] \\
  &= p_{t,v}\qty(\bm{X}_i)
  \end{align*}

  All of these conditional probabilities are functions only of $p_{t,v}(\bm{X}_i)$, and thus do not depend on $Y_i^{(t,v)}$.
  Therefore, the conditional distribution of $W_i^{(t,v)}$ given $p_{t,v}(\bm{X}_i)$ is independent of $Y_i^{(t,v)}$.
  Hence, we conclude that
  \begin{align*}
    Y_i^{(t, v)} \indep W_i^{(t,v)} \mid p_{t,v} \qty(\bm{X}_i).
  \end{align*}
\end{proof}

\section{Proof of Proposition \ref{prop:identification}}
\label{sec:Proof_of_Proposition2}

\begin{proof}
By the law of iterated expectations, the equation
\[
\mathbb{E}\qty[Y_i^{(t,v)}]
=
\mathbb{E}_{\bm{X}_i}
\qty[
  \mathbb{E}\qty[Y_i^{(t,v)} \mid \bm{X}_i ]
]
\]
is held. 
Applying Assumptions 1--3 (consistency, weak conditional exchangeability, and positivity), the above expectation can be rewritten step by step as follows:
\begin{align*}
\psi_{t,v}
&=\mathbb{E}_{\bm{X}_i}\qty[
  \mathbb{E}\qty[Y_i^{(t,v)} \mid \bm{X}_i]
  \cdot
  \mathbb{E}\qty[
    \frac{
      D_i^{(t)} Z_i^{(t,v)}
    }{
      e_t(\bm{X}_i) \pi_{t,v}(\bm{X}_i)
    }
    \Biggm| \bm{X}_i
  ]
] && \\
&=\mathbb{E}_{\bm{X}_i}\qty[
  \mathbb{E}\qty[
    \frac{
      Y_i^{(t,v)} D_i^{(t)} Z_i^{(t,v)}
    }{
      e_t(\bm{X}_i) \pi_{t,v}(\bm{X}_i)
    }
    \Biggm| \bm{X}_i
  ]
] && \text{(by Assumption 2)}\\
&=\mathbb{E}_{\bm{X}_i}\qty[
  \mathbb{E}\qty[
    \frac{
      Y_i D_i^{(t)} Z_i^{(t,v)}
    }{
      e_t(\bm{X}_i) \pi_{t,v}(\bm{X}_i)
    }
    \Biggm| \bm{X}_i
  ]
] && \text{(by consistency, Assumption 1)}\\
&=
\mathbb{E}\qty[
  \frac{
    D_i^{(t)} Z_i^{(t,v)}
  }{
    e_t(\bm{X}_i) \pi_{t,v}(\bm{X}_i)
  }Y_i 
].
\end{align*}
Hence, the mean potential outcome for the treatment-version pair $(t,v)$ is identified as an expectation of the observed outcome weighted by the inverse probability.
\end{proof}

\section{Proof of Theorem \ref{thm:consistency}}
\label{sec:Proof_of_Theorem1}

\begin{proof}
    We first prove the following lemma.

    \begin{lemma}
        Suppose that the MoE outcome model $f_{t,v}(Y_i \mid \bm{X}_i; \bm{\gamma}_{t,v})$ and the gating model  $\pi_{t,v}(\bm{X}_i; \bm{\eta}_{t,v})$ are correctly specified and identifiable under the initialization and ordering restrictions in Theorem 2 of \citet{Jiang1999-vw}.
        Further assume that the treatment assignment model $e_t(\bm{X}_i; \bm{\zeta}_t)$ is correctly specified.
        Then, under standard regularity conditions,
        \begin{align*}
            (\widehat{\bm{\theta}}_t, \widehat{\bm{\zeta}}_t) \xrightarrow{p} (\bm{\theta}_t^{*}, \bm{\zeta}_t^{*}),
        \end{align*}
        where $\bm{\theta}_t^{*}$ denotes the true parameter values of the outcome and version assignment model, and $\bm{\zeta}_t^{*}$ denotes the true parameter values of the treatment assignment model.
        \label{lemma:consistency_model_parameters}
    \end{lemma}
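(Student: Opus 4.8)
The plan is to prove Lemma \ref{lemma:consistency_model_parameters} as a standard M-estimation (extremum-estimator) consistency result, applied separately to the two (variation-independent) blocks of the log-likelihood. First I would observe that the incomplete-data log-likelihood in \eqref{eq:incomplete_loglikelihood} decomposes additively into a term depending only on $\bm{\zeta}_t$ (the treatment-assignment multinomial logit) and terms depending only on $(\bm{\eta}_{t,v},\bm{\gamma}_{t,v})_{v\in\mathcal V^{(t)}}$ for each fixed $t$, so the maximizers can be analyzed independently. For the treatment-assignment block, $\widehat{\bm{\zeta}}_t$ is the MLE of a correctly specified multinomial logistic regression; consistency $\widehat{\bm{\zeta}}_t \xrightarrow{p}\bm{\zeta}_t^{*}$ follows from the classical argument: the population criterion $\mathbb E[\sum_t D_i^{(t)}\log e_t(\bm X_i;\bm\zeta_t)]$ is uniquely maximized at the truth by the information inequality (using identifiability of the logit link after fixing $\bm{\zeta}_0=\bm 0$), the criterion is continuous and concave in $\bm\zeta$, $\Theta$ is compact by hypothesis (2), and a uniform law of large numbers gives uniform convergence of the sample criterion; then apply the standard consistency theorem for extremum estimators (e.g., Newey--McFadden Theorem 2.1 or van der Vaart Theorem 5.7).

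The substantive block is the MoE likelihood for each treatment $t$, i.e.\ \eqref{eq:incomplete_maximization}. Here I would invoke the identifiability result already imported from \citet{Jiang1999-vw}: after the ``fix $\bm\eta_{t,0}=\bm 0$, lexicographically order the experts by $\bm\gamma_{t,v}$, and re-fix the reference gating coefficient'' procedure described in Section 3.4, the map $\bm\theta_t \mapsto \Pr(Y_i\mid T_i=t,\bm X_i;\bm\theta_t)$ is injective on the restricted parameter space, so the Kullback--Leibler divergence between the true mixture density and the model density is zero only at $\bm\theta_t^{*}$. Consequently the population objective $\bm\theta_t\mapsto \mathbb E[D_i^{(t)}\log\sum_v \pi_{t,v}f_{t,v}]$ has a unique maximizer at $\bm\theta_t^{*}$. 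Combined with compactness of $\Theta$, continuity of the objective in $\bm\theta_t$, and a uniform law of large numbers (the log-mixture density is dominated by an integrable envelope once we work on the compact parameter set and use $\mathbb E[|Y_i|]<\infty$ together with, if needed, boundedness of the variance components away from zero, which is part of ``standard regularity conditions''), the extremum-estimator consistency theorem again yields $\widehat{\bm\theta}_t\xrightarrow{p}\bm\theta_t^{*}$. Stacking the two blocks gives $(\widehat{\bm\theta}_t,\widehat{\bm\zeta}_t)\xrightarrow{p}(\bm\theta_t^{*},\bm\zeta_t^{*})$.

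The main obstacle I anticipate is not the abstract M-estimation machinery but verifying its hypotheses for the mixture term: (i) making the identifiability of \citet{Jiang1999-vw} actually deliver a \emph{unique} population maximizer requires care that the lexicographic ordering step and the reference-normalization are applied to the population problem in exactly the same way as in the algorithm, and that ties in $\bm\gamma$ (which would break the lexicographic rule) occur with probability zero under the true parameter — this should be argued or assumed away; and (ii) establishing the uniform law of large numbers / dominance condition for $\log\sum_v \pi_{t,v}(\bm X_i)f_{t,v}(Y_i\mid\bm X_i;\bm\gamma_{t,v})$ over the compact set $\Theta$, since the Gaussian experts' log-densities involve $(Y_i-\bm\beta_{t,v}^\top\tilde{\bm X}_i)^2/\sigma_{t,v}^2$ and are unbounded in $Y_i$; here I would either impose $\mathbb E[Y_i^2]<\infty$ as part of the regularity conditions, or bound the log-mixture from above using $\log\sum_v \pi_{t,v}f_{t,v}\le \max_v \log f_{t,v} + \log J^{(t)}$ and from below using the true component, together with $\sigma_{t,v}$ bounded away from $0$ on $\Theta$, to produce an integrable envelope. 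A secondary subtlety is that the EM algorithm only guarantees convergence to a stationary point of the likelihood, not the global MLE; I would note that the lemma is stated for the maximum-likelihood estimator (the global maximizer of \eqref{eq:incomplete_maximization}), so this gap is a known limitation rather than part of the proof, and mention it explicitly.
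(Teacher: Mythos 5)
Your proposal is correct and follows essentially the same route as the paper's own proof: identifiability of the MoE from Theorem~2 of \citet{Jiang1999-vw} under the fixing/ordering restrictions, standard MLE consistency for the correctly specified multinomial logit treatment model, and joint convergence of the two blocks (the paper invokes Slutsky; you stack the blocks directly). Your write-up is in fact more careful than the paper's one-paragraph argument --- in particular, the envelope/dominance conditions for the uniform law of large numbers over the Gaussian mixture log-likelihood and the distinction between the global MLE and the stationary point returned by EM are exactly the details the paper sweeps into ``standard regularity conditions,'' so flagging them explicitly is a genuine improvement rather than a deviation.
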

    \begin{proof}
        By Theorem 2 of \citet{Jiang1999-vw}, the MoE is identifiable under the assumed order and initialization constraints.
        Consequently, the log-likelihood function for each treatment $t$ \eqref{eq:complete_log_likelihood_for_each_treatment} is correctly specified and identifiable.
        Under standard regularity conditions, the maximum likelihood estimator $\widehat{\bm{\theta}}_t$ obtained by the EM algorithm is consistent:
        \begin{align*}
            \widehat{\bm{\theta}}_t \xrightarrow{p} \bm{\theta}_t^{*}.
        \end{align*}
        Similarly, since the treatment assignment model $e_t(\bm{X}_i;\bm{\zeta}_t)$ is correctly specified as a multinomial logistic regression, the MLE $\widehat{\bm{\zeta}}_t$ is also consistent:
        \begin{align*}
            \widehat{\bm{\zeta}}_t \xrightarrow{p} \bm{\zeta}_t^{*}.
        \end{align*}
        Because both estimators converge in probability to their true parameter values, the joint convergence follows directly from Slutsky's theorem.
    \end{proof}

    Next, we obtain
    \begin{align*}
      \qty| \frac{1}{n}\sum_{i=1}^n g_i(\widehat{\bm{\zeta}}_t, \widehat{\bm{\theta}}_t) - \mathbb{E}\qty[g_i(\bm{\zeta}_t^*, \bm{\theta}_t^*)] | &= \qty| \frac{1}{n}\sum_{i=1}^n g_i(\widehat{\bm{\zeta}}_t, \widehat{\bm{\theta}}_t) -\mathbb{E}\qty[g_i(\widehat{\bm{\zeta}}_t, \widehat{\bm{\theta}}_t)] +  \mathbb{E}\qty[g_i(\widehat{\bm{\zeta}}_t, \widehat{\bm{\theta}}_t)] - \mathbb{E}\qty[g_i(\bm{\zeta}_t^*, \bm{\theta}_t^*)]|\\
      &\leq \qty| \frac{1}{n}\sum_{i=1}^n g_i(\widehat{\bm{\zeta}}_t, \widehat{\bm{\theta}}_t) -\mathbb{E}\qty[g_i(\widehat{\bm{\zeta}}_t, \widehat{\bm{\theta}}_t)] |  + \qty| \mathbb{E}\qty[g_i(\widehat{\bm{\zeta}}_t, \widehat{\bm{\theta}}_t)] - \mathbb{E}\qty[g_i(\bm{\zeta}_t^*, \bm{\theta}_t^*)]|\\
      &\leq \underbrace{\sup_{\bm{\zeta}_t, \bm{\theta}_t \in \Theta} \qty| \frac{1}{n}\sum_{i=1}^n g_i(\bm{\zeta}_t, \bm{\theta}_t) -\mathbb{E}\qty[g_i(\bm{\zeta}_t, \bm{\theta}_t)] |}_{\text{(i)}} + \underbrace{\qty| \mathbb{E}\qty[g_i(\widehat{\bm{\zeta}}_t, \widehat{\bm{\theta}}_t)] - \mathbb{E}\qty[g_i(\bm{\zeta}_t^*, \bm{\theta}_t^*)]|}_{\text{(ii)}}.
    \end{align*}
    
    We will show that both (i) and (ii) converge to zero in probability.
    
    The convergence (i) is shown by the uniform law of large numbers (ULLN).
    From the assumptions, 
    \begin{itemize}
      \item The sample $\qty{(Y_i, \bm{X}_i, D_i^{(t)})}_{i=1}^n$ consists of i.i.d. observations.
      \item $\Theta$ is compact.
      \item $g_i(\bm{\zeta}_t,\bm{\theta}_t)$ is continuous at each   $(\bm{\zeta}_t,\bm{\theta}_t) \in \Theta$ with probability one.
      \item For all $(\bm{\zeta}_t,\bm{\theta}_t)\in\Theta$, there exists $M_i$ such that
      \begin{align*}
        \qty|g_i(\bm{\zeta}_t,\bm{\theta}_t)|
        \le M_i := \frac{|Y_i|}{c_e c_\pi},
        \quad \mathbb{E}[M_i] < \infty,
      \end{align*}
      where $c_e, c_\pi>0$ are positive constants satisfying
      \begin{align*}
        \inf_{\bm{x}} e_t(\bm{x}; \bm{\zeta}_t) \ge c_e, \quad
        \inf_{\bm{x}} \pi_{t,v}(\bm{x}; \bm{\eta}_{t,v}) \ge c_\pi.    
      \end{align*}

    \end{itemize}
    Hence, by applying the ULLN \citep[Lemma 2.4]{Newey1994-mn}, we obtain
    \begin{align}
      \sup_{\bm{\zeta}_t, \bm{\theta}_t \in \Theta} \qty| \frac{1}{n}\sum_{i=1}^n g_i(\bm{\zeta}_t, \bm{\theta}_t) -\mathbb{E}\qty[g_i(\bm{\zeta}_t, \bm{\theta}_t)] | \xrightarrow{p} 0,
    \end{align}
    and $\mathbb{E}\qty[g_i(\bm{\zeta}_t,\bm{\theta}_t)]$ is continuous at each $(\bm{\zeta}_t,\bm{\theta}_t) \in \Theta$.

    The convergence (ii) is derived from the continuous mapping theorem.
    Because $(\widehat{\bm{\zeta}}_t, \widehat{\bm{\theta}}_t)\xrightarrow{p}(\bm{\zeta}_t^*,\bm{\theta}_t^*)$ from Lemma \ref{lemma:consistency_model_parameters}, the continuous mapping theorem yields
    \begin{align}
        \mathbb{E}\qty[g_i(\widehat{\bm{\zeta}}_t,\widehat{\bm{\theta}}_t)]
      \to
        \mathbb{E}\qty[g_i(\bm{\zeta}_t^*,\bm{\theta}_t^*)].
    \end{align}
    Therefore, we obtain the convergence (ii):
    \begin{align}
      \qty|
        \mathbb{E}\qty[g_i(\widehat{\bm{\zeta}}_t, \widehat{\bm{\theta}}_t) ]
        - \mathbb{E}\qty[g_i(\bm{\zeta}_t^*,\bm{\theta}_t^*)]
      |
      \to 0.
    \end{align}
    Combining (i) and (ii) gives
    \begin{align}
      \frac{1}{n}\sum_{i=1}^n g_i(\widehat{\bm{\zeta}}_t, \widehat{\bm{\theta}}_t) \xrightarrow{p} \mathbb{E}\qty[g_i(\bm{\zeta}_t^*, \bm{\theta}_t^*)].
    \end{align}
    Therefore, the plug-in Horvitz-Thompson estimator
    $\psi_{t,v}^{\text{HT}}(\widehat{\bm{\zeta}}_t, \widehat{\bm{\theta}}_t)$
    is consistent.        
\end{proof}

\section{Closed-form solutions for Gaussian outcome experts}
\label{sec:closed-form-solutions-for-gaussian-outcome-experts}
The problems \eqref{eq:Mstep_gating} and \eqref{eq:Mstep_expert} can be viewed as the optimization problem for the multinomial logistic regression for soft labels and a weighted Gaussian linear regression model, respectively.
In particular, we update $\bm{\gamma}_{t,v}$ for each $t \in \mathcal{T}$ and $v \in \mathcal V^{(t)}$ by solving the following problem
\begin{align}
\bm{\gamma}_{t,v}^{\text{new}}
&=
\arg\max_{\bm{\gamma}_{t,v}}
\sum_{i=1: T_i = t}^n  r_{t,v}(Y_i, \bm{X}_i; \bm{\theta}_t)  \log f_{t,v}(Y_i\mid \bm{X}_i;\bm{\gamma}_{t,v}).
\end{align}
For the Gaussian expert, the log-likelihood is given by
\begin{align}
  \log f_{t,v}(Y_i\mid \bm{X}_i;\bm{\gamma}_{t,v})
  =
  -\frac{1}{2} \log(2 \pi \sigma_{t,v}^2)
  -\frac{1}{2\sigma_{t,v}^2}
      \qty(
        Y_i - \tilde{\bm{X}}_i^\top \bm{\beta}_{t,v}
        )^2.
\end{align}
Hence the optimization problem for $\bm\gamma_{t,v}$ can be rewritten as
\begin{align}
  \bm{\gamma}_{t,v}^{\text{new}}
  =\argmin_{\bm{\gamma}_{t,v}}
  \sum_{i: T_i=t} r_{t,v}(Y_i, \bm{X}_i; \bm{\theta}_t)
  \qty[
    \log (2\pi \sigma_{t,v}^2)
    +\frac{1}{\sigma_{t,v}^2} \qty(
      Y_i - \bm{X}_i^\top \bm{\beta}_{t,v}
    )^2
  ].
\end{align}
By differentiating this minimization problem with respect to each parameter and setting the derivatives equal to zero, we obtain closed-form solutions.
The explicit derivations are provided in the next subsections.

\subsection{Update for the mean parameter}

To derive the solution of $\bm{\beta}_{t,v}$, we calculate the derivative of the following objective function
\begin{align}
    L_{\bm{\beta}_{t,v}}(\bm{\beta}_{t,v})
    =
    \sum_{i : T_i = t}
        r_{t,v}(Y_i, \bm{X}_i; \bm{\theta}_t)
        \qty[Y_i - \tilde{\bm{X}}_i^\top \bm{\beta}_{t,v}]^2.
\end{align}
The derivative with respect to $\bm{\beta}_{t,v}$ can be calculated as
\begin{align*}
    \frac{\partial L_{\bm{\beta}_{t,v}}}{\partial \bm{\beta}_{t,v}}
    &= -2 \sum_{i : T_i = t}
        r_{t,v}(Y_i, \bm{X}_i; \bm{\theta}_t) \tilde{\bm{X}}_i
        \qty(Y_i - \tilde{\bm{X}}_i^\top \bm{\beta}_{t,v}).
\end{align*}
By setting the derivative equal to zero and expanding equations, we obtain the following normal equation:
\begin{align}
    \sum_{i : T_i = t} r_{t,v}(Y_i, \bm{X}_i; \bm{\theta}_t) \tilde{\bm{X}}_i \tilde{\bm{X}}_i^\top \bm{\beta}_{t,v}
    =
   \sum_{i : T_i = t} r_{t,v}(Y_i, \bm{X}_i; \bm{\theta}_t) \tilde{\bm{X}}_i Y_i.
\end{align}
To ease the notation, we define the design matrix, weight matrix, and outcome vector for treatment $t$ as
\begin{align}
    \bm{\Phi}_t
    =
    \begin{bmatrix}
        \tilde{\bm{X}}_{i_1}^\top \\
        \vdots \\
        \tilde{\bm{X}}_{i_{n_t}}^\top
    \end{bmatrix},
    \quad
    \bm{Y}_t
    =
    \begin{bmatrix}
        Y_{i_1} \\
        \vdots \\
        Y_{i_{n_t}}
    \end{bmatrix},
    \quad
    \bm{R}_{t,v}
    =
    \mathrm{diag}\qty[
      r_{t,v}(Y_{i_1}, \bm{X}_{i_1}; \bm{\theta}_t), \dots, r_{t,v}(Y_{i_{n_t}}, \bm{X}_{i_{n_t}}; \bm{\theta}_t)],
\end{align}
respectively, where $\{i_1,\dots,i_{n_t}\}$ denotes the set of indices $i$ such that $T_i = t$.
By using these notations, the normal equation can be represented by
\begin{align}
    \bm{\Phi}_t^\top \bm R_{t,v} \bm{\Phi}_t \bm{\beta}_{t,v}
    =
    \bm{\Phi}_t^\top \bm R_{t,v} \bm Y_t.
\end{align}
Therefore, the regression coefficient $\bm{\beta}_{t,v}$ is updated by the weighted least squares solution:
\begin{align}
  \label{eq:beta-update}
  \bm{\beta}_{t,v}^{\text{new}}
  =
  \qty(\bm{\Phi}_t^\top \bm{R}_{t,v} \bm{\Phi}_t)^{-1} \bm{\Phi}_t^\top \bm{R}_{t,v} \bm{Y}_t,
\end{align}
where $\bm{\Phi}_t^\top \bm{R}_{t,v} \bm{\Phi}_t$ is assumed to be nonsingular.

\subsection{Update for the variance parameter}

Given $\bm{\beta}_{t,v}$, to obtain $\sigma_{t,v}^2$, we minimize the following objective function
\begin{align}
    L_{\sigma_{t,v}^2}(\sigma_{t,v}^2)
    =
    \sum_{i: T_i=t}
    r_{t,v}(Y_i, \bm{X}_i; \bm{\theta}_t)
    \qty[
      \log (2\pi \sigma_{t,v}^2)
      +\frac{1}{\sigma_{t,v}^2} \qty(
        Y_i - \bm{X}_i^\top \bm{\beta}_{t,v}
      )^2
    ].
\end{align}
The derivative with respect to $\sigma_{t,v}^2$ can be calculated as
\begin{align}
    \frac{\partial L_{\sigma_{t,v}^2}}{\partial \sigma_{t,v}^2}
    =
    \sum_{i : T_i = t}
    r_{t,v}(Y_i, \bm{X}_i; \bm{\theta}_t)
        \left[
            \frac{1}{\sigma_{t,v}^2}
            -
            \frac{1}{(\sigma_{t,v}^2)^2}
            \qty(Y_i - \tilde{\bm{X}}_i^\top \bm{\beta}_{t,v})^2
        \right].
\end{align}
Setting this derivative equal to zero and multiplying both sides by $\sigma_{t,v}^4$ yields
\begin{align}
    \sum_{i : T_i = t}
    r_{t,v}(Y_i, \bm{X}_i; \bm{\theta}_t)
        \qty[
            \sigma_{t,v}^2
            -
            \qty(Y_i - \tilde{\bm{X}}_i^\top \bm{\beta}_{t,v})^2
        ]
    = 0,
\end{align}
that is
\begin{align}
    \sigma_{t,v}^2
    \sum_{i : T_i = t}
    r_{t,v}(Y_i, \bm{X}_i; \bm{\theta}_t)
    =
    \sum_{i : T_i = t}
    r_{t,v}(Y_i, \bm{X}_i; \bm{\theta}_t)
        \qty(Y_i - \tilde{\bm{X}}_i^\top \bm{\beta}_{t,v})^2.
\end{align}
Therefore, we obtain the solution as follows:
\begin{align}
    {\sigma_{t,v}^2}^{\text{new}}
    &=
    \frac{
        \sum_{i : T_i = t}
          r_{t,v}(Y_i, \bm{X}_i; \bm{\theta}_t)
            \qty(Y_i - \tilde{\bm{X}}_i^\top \bm{\beta}_{t,v})^2
    }{
        \sum_{i : T_i = t}
        r_{t,v}(Y_i, \bm{X}_i; \bm{\theta}_t)
    }\\
    &=
    \frac{
        (\bm{Y}_t - \bm{\Phi}_t \bm{\beta}_{t,v})^\top
        \bm{R}_{t,v}
        (\bm{Y}_t - \bm{\Phi}_t \bm{\beta}_{t,v})
    }{
        \bm{1}^\top \bm{R}_{t,v} \bm{1}
    },
\end{align}
where $\bm 1$ denotes an $n_t$-dimensional vector of ones.

%
%
\bibliographystyle{apalike}
\bibliography{references}

@ARTICLE{Fong2023-qs,
  title     = "Causal inference with latent treatments",
  author    = "Fong, Christian and Grimmer, Justin",
  journal   = "Am. J. Pol. Sci.",
  publisher = "Wiley",
  volume    =  67,
  number    =  2,
  pages     = "374--389",
  month     =  apr,
  year      =  2023,
  language  = "en"
}

@ARTICLE{Goplerud2025-vf,
  title   = "Estimating heterogeneous causal effects of high-dimensional
             treatments: Application to conjoint analysis",
  author  = "Goplerud, Max and Imai, Kosuke and Pashley, Nicole E",
  journal = "Ann. Appl. Stat.",
  volume  =  19,
  number  =  2,
  pages   = "866--888",
  month   =  jun,
  year    =  2025
}

@ARTICLE{Kang2020-xx,
  title     = "Causal inference of latent classes in complex survey data with
               the estimating equation framework",
  author    = "Kang, Joseph and He, Yulei and Hong, Jaeyoung and Esie, Precious
               and Bernstein, Kyle T",
  journal   = "Stat. Med.",
  publisher = "Wiley",
  volume    =  39,
  number    =  3,
  pages     = "207--219",
  abstract  = "Latent class analysis (LCA) has been effectively used to cluster
               multiple survey items. However, causal inference with an exposure
               variable, identified by an LCA model, is challenging because (1)
               the exposure variable is unobserved and harbors the uncertainty
               of estimating parameters in the LCA model and (2) confounding
               bias adjustments need to be done with the unobserved LCA-driven
               exposure variable. In addition to these challenges, complex
               survey design features and survey weights must be accounted for
               if they are present. Our solutions to these issues are to (1)
               assess point estimates with the expected estimating function
               approach and (2) modify the survey design weights with LCA-based
               propensity scores. This paper aims to introduce a statistical
               procedure to apply the estimating equation approach to assessing
               the effects of LCA-driven cause in complex survey data using an
               example of the National Health and Nutrition Examination Survey.",
  month     =  feb,
  year      =  2020,
  keywords  = "complex survey; jackknife; latent class exposure; propensity
               score",
  language  = "en"
}

@ARTICLE{VanderWeele2013-lw,
  title     = "Causal inference under multiple versions of treatment",
  author    = "VanderWeele, Tyler J and Hern\`{a}n, Miguel A",
  journal   = "J. Causal Inference",
  publisher = "Walter de Gruyter GmbH",
  volume    =  1,
  number    =  1,
  pages     = "1--20",
  abstract  = "Abstract: In this article, we discuss causal inference when there
               are multiple versions of treatment. The potential outcomes
               framework, as articulated by Rubin, makes an assumption of no
               multiple versions of treatment, and here we discuss an extension
               of this potential outcomes framework to accommodate causal
               inference under violations of this assumption. A variety of
               examples are discussed in which the assumption may be violated.
               Identification results are provided for the overall treatment
               effect and the effect of treatment on the treated when multiple
               versions of treatment are present and also for the causal effect
               comparing a version of one treatment to some other version of the
               same or a different treatment. Further identification and
               interpretative results are given for cases in which the version
               precedes the treatment as when an underlying treatment variable
               is coarsened or dichotomized to create a new treatment variable
               for which there are effectively “multiple versions”. Results are
               also given for effects defined by setting the version of
               treatment to a prespecified distribution. Some of the
               identification results bear resemblance to identification results
               in the literature on direct and indirect effects. We describe
               some settings in which ignoring multiple versions of treatment,
               even when present, will not lead to incorrect inferences.",
  month     =  may,
  year      =  2013,
  keywords  = "Causal inference; consistency; multiple versions of treatment;
               potential outcomes; SUTVA",
  language  = "en"
}

@ARTICLE{Hernan2011-cy,
  title     = "Compound treatments and transportability of causal inference",
  author    = "Hern\`{a}n, Miguel A and VanderWeele, Tyler J",
  journal   = "Epidemiology",
  publisher = "Ovid Technologies (Wolters Kluwer Health)",
  volume    =  22,
  number    =  3,
  pages     = "368--377",
  abstract  = "Ill-defined causal questions present serious problems for
               observational studies-problems that are largely unappreciated.
               This paper extends the usual counterfactual framework to consider
               causal questions about compound treatments for which there are
               many possible implementations (for example, ``prevention of
               obesity''). We describe the causal effect of compound treatments
               and their identifiability conditions, with a special emphasis on
               the consistency condition. We then discuss the challenges of
               using the estimated effect of a compound treatment in one study
               population to inform decisions in the same population and in
               other populations. These challenges arise because the causal
               effect of compound treatments depends on the distribution of the
               versions of treatment in the population. Such causal effects can
               be unpredictable when the versions of treatment are unknown. We
               discuss how such issues of ``transportability'' are related to
               the consistency condition in causal inference. With more
               carefully framed questions, the results of epidemiologic studies
               can be of greater value to decision-makers.",
  month     =  may,
  year      =  2011,
  language  = "en"
}

@ARTICLE{Imai2004-pf,
  title     = "Causal Inference With General Treatment Regimes: Generalizing the
               Propensity Score",
  author    = "Imai, Kosuke and van Dyk, David A",
  journal   = "J. Am. Stat. Assoc.",
  publisher = "Informa UK Limited",
  volume    =  99,
  number    =  467,
  pages     = "854--866",
  abstract  = "In this article we develop the theoretical properties of the
               propensity function, which is a generalization of the propensity
               score of Rosenbaum and Rubin. Methods based on the …",
  month     =  sep,
  year      =  2004
}

@ARTICLE{Imbens2000-tm,
  title     = "The role of the propensity score in estimating dose-response
               functions",
  author    = "Imbens, G W",
  journal   = "Biometrika",
  publisher = "Oxford Academic",
  volume    =  87,
  number    =  3,
  pages     = "706--710",
  abstract  = "Abstract. Estimation of average treatment effects in
               observational studies often requires adjustment for differences
               in pre-treatment variables. If the num",
  month     =  sep,
  year      =  2000,
  language  = "en"
}

@ARTICLE{Jiang1999-vw,
  title     = "On the identifiability of mixtures-of-experts",
  author    = "Jiang, W and Tanner, M A",
  journal   = "Neural Netw.",
  publisher = "Elsevier BV",
  volume    =  12,
  number    =  9,
  pages     = "1253--1258",
  abstract  = "In mixtures-of-experts (ME) models, ``experts'' of generalized
               linear models are combined, according to a set of local weights
               called the ``gating function''. The invariant transformations of
               the ME probability density functions include the permutations of
               the expert labels and the translations of the parameters in the
               gating functions. Under certain conditions, we show that the ME
               systems are identifiable if the experts are ordered and the
               gating parameters are initialized. The conditions are validated
               for Poisson, gamma, normal and binomial experts.",
  month     =  nov,
  year      =  1999,
  language  = "en"
}

@INCOLLECTION{Newey1994-mn,
  title     = "Large sample estimation and hypothesis testing",
  author    = "Newey, Whitney K and Mcfadden, Daniel",
  booktitle = "Handbook of Econometrics",
  publisher = "Elsevier",
  volume    =  4,
  pages     = "2111--2245",
  abstract  = "Asymptotic distribution theory is the primary method used to
               examine the properties of econometric estimators and tests. We
               present conditions for obtaining cosistency and asymptotic
               normality of a very general class of estimators (extremum
               estimators). Consistent asymptotic variance estimators are given
               to enable approximation of the asymptotic distribution.
               Asymptotic efficiency is another desirable property then
               considered. Throughout the chapter, the general results are also
               specialized to common econometric estimators (e.g. MLE and GMM),
               and in specific examples we work through the conditions for the
               various results in detail. The results are also extended to
               two-step estimators (with finite-dimensional parameter estimation
               in the first step), estimators derived from nonsmooth objective
               functions, and semiparametric two-step estimators (with
               nonparametric estimation of an infinite-dimensional parameter in
               the first step). Finally, the trinity of test statistics is
               considered within the quite general setting of GMM estimation,
               and numerous examples are given.",
  chapter   =  36,
  year      =  1994
}

@ARTICLE{Rubin1980-hl,
  title     = "Randomization analysis of experimental data: The fisher
               randomization test comment",
  author    = "Rubin, Donald B",
  journal   = "J. Am. Stat. Assoc.",
  publisher = "JSTOR",
  volume    =  75,
  number    =  371,
  pages     = "591--593",
  abstract  = "Donald B. Rubin, Randomization Analysis of Experimental Data: The
               Fisher Randomization Test Comment, Journal of the American
               Statistical Association, Vol. 75, No. 371 (Sep., 1980), pp.
               591-593",
  month     =  sep,
  year      =  1980,
  language  = "en"
}

@ARTICLE{Rubin1978-ht,
  title     = "Bayesian inference for causal effects: The role of randomization",
  author    = "Rubin, Donald B",
  journal   = "Ann. Stat.",
  publisher = "Institute of Mathematical Statistics",
  volume    =  6,
  number    =  1,
  pages     = "34--58",
  abstract  = "Causal effects are comparisons among values that would have been
               observed under all possible assignments of treatments to
               experimental units. In an experiment, one assignment of
               treatments is chosen and only the values under that assignment
               can be observed. Bayesian inference for causal effects follows
               from finding the predictive distribution of the values under the
               other assignments of treatments. This perspective makes clear the
               role of mechanisms that sample experimental units, assign
               treatments and record data. Unless these mechanisms are ignorable
               (known probabilistic functions of recorded values), the Bayesian
               must model them in the data analysis and, consequently, confront
               inferences for causal effects that are sensitive to the
               specification of the prior distribution of the data. Moreover,
               not all ignorable mechanisms can yield data from which inferences
               for causal effects are insensitive to prior specifications.
               Classical randomized designs stand out as especially appealing
               assignment mechanisms designed to make inference for causal
               effects straightforward by limiting the sensitivity of a valid
               Bayesian analysis.",
  month     =  jan,
  year      =  1978,
  keywords  = "62A15; 62B15; 62C10; 62F15; 62K99; Bayesian; causality;
               experimentation; inference; missing data; Randomization",
  language  = "en"
}

@ARTICLE{Rubin1974-zj,
  title     = "Estimating causal effects of treatments in randomized and
               nonrandomized studies",
  author    = "Rubin, Donald B",
  journal   = "J. Educ. Psychol.",
  publisher = "American Psychological Association (APA)",
  volume    =  66,
  number    =  5,
  pages     = "688--701",
  abstract  = "Presents a discussion of matching, randomization, random
               sampling, and other methods of controlling extraneous variation.
               The objective was to specify the benefits of randomization in
               estimating causal effects of treatments. It is concluded that
               randomization should be employed whenever possible but that the
               use of carefully controlled nonrandomized data to estimate causal
               effects is a reasonable and necessary procedure in many cases.
               (15 ref) (PsycINFO Database Record (c) 2016 APA, all rights
               reserved)",
  month     =  oct,
  year      =  1974,
  language  = "en"
}

@ARTICLE{Yakowitz1969-cz,
  title     = "A consistent estimator for the identification of finite mixtures",
  author    = "Yakowitz, S",
  journal   = "Annals of Mathematical Statistics",
  publisher = "JSTOR",
  volume    =  40,
  number    =  5,
  pages     = "1728--1735",
  month     =  oct,
  year      =  1969
}

@ARTICLE{Yakowitz1968-zm,
  title     = "On the identifiability of finite mixtures",
  author    = "Yakowitz, Sidney J and Spragins, John D",
  journal   = "Ann. Math. Stat.",
  publisher = "Institute of Mathematical Statistics",
  volume    =  39,
  number    =  1,
  pages     = "209--214",
  month     =  feb,
  year      =  1968
}

@ARTICLE{Teicher1963-ya,
  title     = "Identifiability of Finite Mixtures",
  author    = "Teicher, H",
  journal   = "Annals of Mathematical Statistics",
  publisher = "Institute of Mathematical Statistics",
  volume    =  34,
  number    =  4,
  pages     = "1265--1269",
  month     =  dec,
  year      =  1963
}

@BOOK{Kleiber2008-qu,
  title     = "Applied Econometrics with {R}",
  author    = "Kleiber, Christian and Zeileis, Achim",
  publisher = "Springer-Verlag",
  address   = "New York",
  abstract  = "R is a language and environment for data analysis and graphics.
               It may be considered an implementation of S, an award-winning
               language initially - veloped at Bell Laboratories since the late
               1970s. Th",
  month     =  aug,
  year      =  2008,
  language  = "en"
}

@ARTICLE{Adusumilli2024-iv,
  title   = "Heterogeneity and Endogenous Compliance: Implications for Scaling
             Class Size Interventions",
  author  = "Adusumilli, K and Agostinelli, F and Borghesan, E",
  journal = "NBER Working Paper 32338",
  year    =  2024
}

@ARTICLE{Dempster1977-yd,
  title     = "Maximum likelihood from incomplete data via the \textit{EM}
               algorithm",
  author    = "Dempster, A P and Laird, N M and Rubin, D B",
  journal   = "J. R. Stat. Soc. Series B Stat. Methodol.",
  publisher = "Oxford University Press (OUP)",
  volume    =  39,
  number    =  1,
  pages     = "1--22",
  abstract  = "Summary A broadly applicable algorithm for computing maximum
               likelihood estimates from incomplete data is presented at various
               levels of generality. Theory showing the monotone behaviour of
               the likelihood and convergence of the algorithm is derived. Many
               examples are sketched, including missing value situations,
               applications to grouped, censored or truncated data, finite
               mixture models, variance component estimation, hyperparameter
               estimation, iteratively reweighted least squares and factor
               analysis.",
  month     =  sep,
  year      =  1977,
  language  = "en"
}

@ARTICLE{Jacobs1991-rx,
  title     = "Adaptive mixtures of local experts",
  author    = "Jacobs, Robert A and Jordan, Michael I and Nowlan, Steven J and
               Hinton, Geoffrey E",
  journal   = "Neural Comput.",
  publisher = "MIT Press - Journals",
  volume    =  3,
  number    =  1,
  pages     = "79--87",
  abstract  = "We present a new supervised learning procedure for systems
               composed of many separate networks, each of which learns to
               handle a subset of the complete set of training cases. The new
               procedure can be viewed either as a modular version of a
               multilayer supervised network, or as an associative version of
               competitive learning. It therefore provides a new link between
               these two apparently different approaches. We demonstrate that
               the learning procedure divides up a vowel discrimination task
               into appropriate subtasks, each of which can be solved by a very
               simple expert network.",
  month     =  mar,
  year      =  1991,
  language  = "en"
}

@ARTICLE{Horvitz1952-nw,
  title     = "A generalization of sampling without replacement from a finite
               universe",
  author    = "Horvitz, D G and Thompson, D J",
  journal   = "J. Am. Stat. Assoc.",
  publisher = "JSTOR",
  volume    =  47,
  number    =  260,
  pages     = "663--685",
  abstract  = "D. G. Horvitz, D. J. Thompson, A Generalization of Sampling
               Without Replacement From a Finite Universe, Journal of the
               American Statistical Association, Vol. 47, No. 260 (Dec., 1952),
               pp. 663-685",
  month     =  dec,
  year      =  1952,
  language  = "en"
}

\end{document}